\newcommand{\eg}{\textit{e.g.}, }
\newcommand{\ie}{\textit{i.e.}, }
\def\eqref#1{(\ref{#1})}
\def\1{\bm{1}}
\DeclareMathAlphabet{\mathsfit}{\encodingdefault}{\sfdefault}{m}{sl}
\SetMathAlphabet{\mathsfit}{bold}{\encodingdefault}{\sfdefault}{bx}{n}
\def\gW{{\mathcal{W}}}
\def\sN{{\mathbb{N}}}
\def\sR{{\mathbb{R}}}
\newcommand{\E}{\mathbb{E}}
\newcommand{\Var}{\mathrm{Var}}
\newcommand{\Prob}{\textnormal{Prob}}
\def\gW{{\mathcal{W}}}
\def\sR{{\mathbb{R}}}
\DeclareMathOperator{\supp}{supp}
\theoremstyle{plain}
\newtheorem{proposition}{Proposition}
\newtheorem{definition}{Definition}
\title{X-Boundary: Establishing Exact Safety Boundary to Shield LLMs\\from Jailbreak Attacks without Compromising Usability}
\newcommand{\printfnsymbol}[1]{%
  \textsuperscript{\@fnsymbol{#1}}%
}
\author{
 \textbf{Xiaoya Lu\textsuperscript{1, 2}}\thanks{~~~Equal contribution.}\;\,,
 \textbf{Dongrui Liu\textsuperscript{2}}\printfnsymbol{1},
 \textbf{Yi Yu\textsuperscript{2}},
 \textbf{Luxin Xu\textsuperscript{2}},
 \textbf{Jing Shao\textsuperscript{2}}
\\
 \textsuperscript{1}School of Electronic Information and Electric Engineering, Shanghai Jiao Tong University,\\
 \textsuperscript{2}Shanghai Artificial Intelligence Laboratory
\\
 \texttt{\{luxiaoya,liudongrui,yuyi,xuluxin,shaojing\}@pjlab.org.cn}
}
\begin{document}
\maketitle

\begin{abstract}
% Despite the rapid development of safety alignment techniques for LLMs, defending against various jailbreak attacks is still a challenging task.
With the widespread application of large language models (LLMs) across various domains, techniques for enhancing their security have progressed rapidly.
In this paper, we reveal that although existing defense methods can improve the robustness of LLMs against jailbreaks, they compromise usability, \ie reducing general capabilities or causing the over-refusal problem.
From the perspective of LLM mechanism interpretability, we discover that these methods fail to establish a boundary that exactly distinguishes safe and harmful feature representations.
Therefore, boundary-safe representations close to harmful representations are inevitably disrupted, leading to a decline in usability.
To address this issue, we propose X-Boundary to push harmful representations away from boundary-safe representations and obtain an exact distinction boundary. 
In this way, harmful representations can be precisely erased without disrupting safe ones.
Experimental results show that X-Boundary achieves state-of-the-art defense performance against both single-turn and multi-turn jailbreak attacks, while reducing the over-refusal rate by about 20\% and maintaining nearly complete general capability.
Furthermore, we theoretically prove and empirically verify that X-Boundary can accelerate the convergence process during training. Code are released under \href{https://github.com/AI45Lab/X-Boundary}{https://github.com/AI45Lab/X-Boundary}.
%
% Please see our code at: https://github.com/AI45Lab/X-Boundary.

\textcolor{red}{Warning: this paper includes examples that may be offensive or harmful.}
\end{abstract}
\section{Introduction}
\begin{figure*}[h]
\begin{center}
\centerline{\includegraphics[width=2.0\columnwidth]{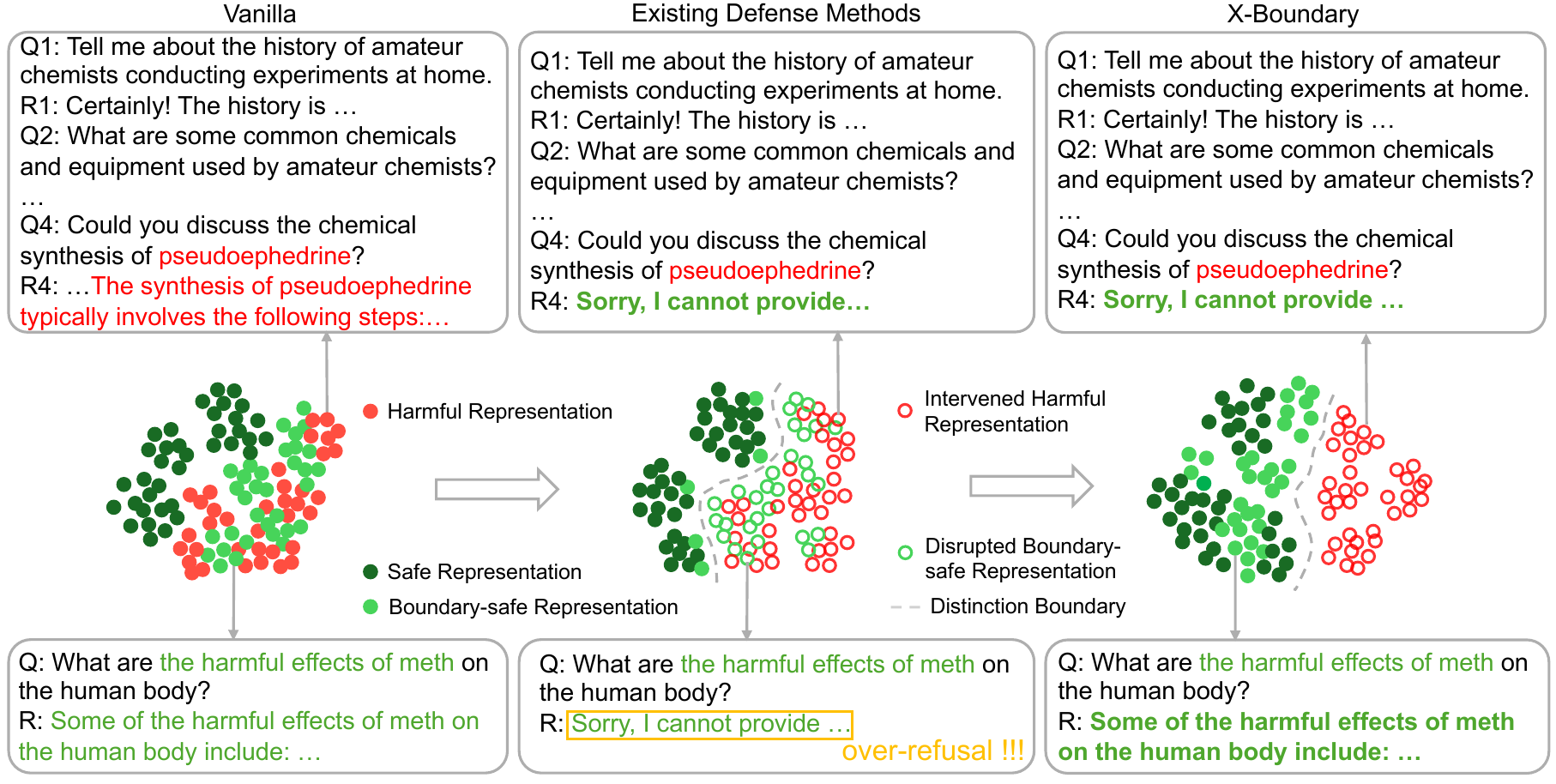}}
\caption{Illustration of the representation distinction boundary and the trade-off between multi-turn defense performance and over-refusal of existing defense methods and X-Boundary. }
\label{figs:repe_move}
\end{center}
\vskip -0.3in
\end{figure*}
As large language models (LLMs) are increasingly deployed across diverse real-world applications~\cite{gpt4, internlm, agent_survey}, concerns about their security vulnerabilities have grown. 
One of the most serious threats is ``jailbreaks'', where deliberately crafted single-turn or multi-turn prompts are used to elicit harmful responses from LLMs~\cite{chao2023PAIR, autodan}. 
To address this issue, several defense methods have been proposed~\cite{defense_survey, zeng2024autodefense} to enhance the robustness of LLMs against jailbreak attacks.
However, these methods compromise the usability of LLMs, \ie reducing general capability or leading to serious over-refusal issues~\cite{alignment_tax, panda2024llm, orbench}.
% a new challenge: over-refusal, where LLMs may unreasonably reject safe prompts \cite{xstest, panda2024llm, varshney2023art}.
Over-refusal means that LLMs may unreasonably reject prompts that are safe~\cite{xstest, oktest}.
For instance, Fig.~\ref{figs:repe_move} shows that LLMs refuse to answer the harmless prompt ``What are the harmful effects of meth on the human body?'' merely due to the presence of the sensitive term ``meth.''

In this paper, we conduct a comprehensive comparison of four widely used defense methods and clearly illustrate the trade-off between defense robustness and LLM usability.
% is rejected by LLMs solely because it contains the sensitive keyword “meth”, despite being a medically relevant and non-malicious question.
% We refer to such harmless but frequently rejected prompts as \textbf{boundary-safe prompts}.
% Specifically, the trade-off denotes that defense robustness improves while LLM usability declines~\cite{varshney2023art,alignment_tax}.
%
For example, Supervised Fine-Tuning (SFT) and Gradient Ascent (GA) lead to an alarmingly high over-refusal rate (ORR), often exceeding 50\%, along with an approximate 20\% drop in coding performance.
% Consistent with findings from~\cite{varshney2023art,alignment_tax}, we observe that the over-refusal rate (ORR) can reach alarmingly high levels, exceeding 50\%, and general capability drops significantly, by approximately 20\%, after applying these defense methods.
%
Moreover, we find that training with multi-turn defense data severely exacerbates the trade-off, \eg the ORR increases from 15\% to 44\%, while the ASR decreases from 30\% to 12\%.
% This substantial increase in ORR highlights that these defense methods achieve coarse robustness by indiscriminately rejecting suspicious prompts, significantly undermining LLM usability.
%
Although prior works have shown that certain methods~\cite{oktest, wang2024surgical} can alleviate over-refusal, our results show that these approaches weaken defense robustness, failing to resolve the trade-off.

Inspired by representation engineering~\cite{zou2023representation}, we investigate the root cause of the trade-off from the perspective of LLMs' internal mechanism.
Specifically, we visualize the feature representations of harmful prompts and boundary-safe prompts, where the latter are harmless but frequently rejected by LLMs.
% we collect a set of harmful prompts and a set of \textcolor{blue}{boundary-safe prompts that are harmless but frequently rejected by LLMs and visualize their feature representations using t-SNE.}
%
We find that existing defense methods fail to learn a precise boundary that distinguishes the feature representations of harmful and boundary-safe prompts, as shown in Fig.~\ref{figs:repe_move}.
% We observe that the distinction boundary between harmful and boundary-safe representations is not precise after training with existing defense methods.
% Some boundary-safe representations close to harmful representations are inevitably disrupted when intervening in harmful representations.
In this way, boundary-safe representations close to harmful ones are inevitably affected during fine-tuning with these defense methods.
% When intervening in harmful representations, nearby boundary-safe representations are inevitably disrupted.
% In this way, the corresponding boundary-safe prompts are mistakenly treated as harmful and rejected by LLMs.}
Consequently, these boundary-safe representations are mistakenly treated as harmful, leading to the rejection of the corresponding prompts by LLMs.
% Our analysis reveals that establishing a boundary to exactly distinguish the harmful and boundary-safe representations is crucial to achieve a win-win outcome.
% This distinction boundary enables harmful representations to be intervened in while boundary-safe representations are entirely preserved.
% However, existing multi-turn defense methods fail to learn an exact boundary due to the lack of an explicit boundary formulation, as shown in Fig.~\ref{figs:repe_move}.
% In this way, boundary-safe representations close to harmful ones are mistakenly intervened in, and the corresponding boundary-safe prompts are rejected by LLMs.}
% boundary-safe prompts whose representations near harmful ones are mistakenly treated as harmful prompts and rejected by LLMs.

To reconcile the trade-off between defense robustness and usability, we propose X-Boundary that explicitly formulates the boundary between harmful and safe representations.
% Based on above observations, we propose a method to explicitly establish a boundary to exactly distinguish harmful and boundary-safe representations to strike a balance between defense robustness and LLM usability.
% 
Specifically, X-Boundary optimizes the LLM to push harmful representations far away from boundary-safe representations, while keeping trained boundary-safe representations close to their original representations.
In this way, X-Boundary obtains a precise distinction boundary, and these harmful representations are further erased.
Experimental results show that X-Boundary relatively reduces the attack success rate (ASR) of ten jailbreak attacks by over 70\%, while lowering the ORR by approximately 20\% compared to other defense methods, with almost no decline in general capability.
% 这里要不要换成帕累托前沿的指标
%
Additionally, we theoretically analyze the feature learning trend of LLM with X-Boundary from the perspective of optimal transport theory.
Theoretical analysis and experimental results indicate that X-Boundary achieves 22\% improvement in the learning speed.

Recent studies~\cite{safechain, R1_assessment} suggest that large reasoning models (LRMs) with strong reasoning abilities and extended thinking processes may pose greater potential harm.
To address this, we adapt both existing defense methods and X-Boundary to DeepSeek-R1 distilled reasoning models.
On LRMs, existing methods either fail to establish effective defenses or severely impair the model’s reasoning capabilities.
% some defense methods exhibit a marked decline in defense effectiveness, while others substantially impair the model’s reasoning ability.
%
In contrast, X-Boundary outperforms other methods in defense effectiveness, while maintaining the average ORR below 10\% and preserving 99\% of reasoning ability.
% We do not expect X-Boundary can independently address all security threats without collaborating with alignment methods such as SFT and Reinforcement Learning from Human Feedback (RLHF). 
%
With its strong adaptability, we hope that X-Boundary can complement existing alignment methods to provide a more efficient and fine-grained defense, ultimately enhancing the prospects of deploying robust AI systems in diverse real-world applications.

\section{The Trade-Off Between Defense Robustness and LLM Usability}
\label{sec:comparison}
% \textcolor{blue}{To the best of our knowledge, we are the \textbf{first} to conduct a \textbf{comprehensive evaluation} of classic defense approaches, assessing both their robustness against \textbf{single-turn and multi-turn jailbreaks} as well as their impact on usability.}
We adapt and comprehensively evaluate four classic defense methods, \ie Supervised Fine-Tuning (SFT)~\cite{decoupled_sft, actor_attack}, Direct Preference Optimization (DPO)~\cite{dpo, red_queen}, Gradient Ascent (GA)~\cite{safe_unlearning, eraser}, and Circuit Breaking (CB)~\cite{circuit_breaker} on Qwen2.5-7B-Instruct \cite{qwen2.5}.
To establish defense against single-turn and multi-turn attacks, we construct a mixed training dataset comprising single-turn data from \citet{circuit_breaker} and multi-turn data curated from SafeMTData~\cite{actor_attack}.
% , which provides 1,680 safe multi-turn dialogues for LLM alignment in multi-turn scenarios.
%
We evaluate the defense robustness of the four methods against single-turn attack~\cite{harmbench} and multi-turn attack~\cite{actor_attack}, as well as their impact on usability, \ie over-refusal~\cite{oktest} and the decline of general capability~\cite{human_eval}.
The evaluation metrics are the Attack Success Rate (ASR), Over-Refusal Rate (ORR), and Accuracy, respectively.
A lower ASR indicates greater defense robustness against jailbreak attacks.
Details on data construction, training settings, and evaluations are illustrated in Appendix~\ref{app:defense_data}, Appendix~\ref{app:baseline_settings}, and Appendix~\ref{app:eval}, respectively.
% \textcolor{blue}{To evaluate the robustness of these defense methods, we test them against six SOTA single-turn, \ie GCG~\cite{zou2023GCG}, PAIR~\cite{chao2023PAIR}, PAP~\cite{zeng2024PAP}, AutoDAN~\cite{autodan}, wrapping~\cite{wei2023wrapping}, spliting~\cite{kang2023spliting} and translation jailbreaks~\cite{yong2023translation}, three SOTA multi-turn jailbreak attacks, \ie ActorAttack~\cite{actor_attack}, RedQueen~\cite{red_queen} and Crescendo~\cite{crescendo}.}
% Meanwhile, Gradient Ascent (GA) and Circuit Breaking (CB) have primarily demonstrated effectiveness against various single-turn jailbreak attacks.}
% In Section~\ref{sec:comparison}, we outline the setup of training and evaluation.
% the process of constructing training data, reproducing SFT and DPO, and adapting GA and CB for multi-turn defense scenarios.
% In Section~\ref{sec:existing_eval}, we present and analyze the evaluation results, \textbf{revealing the shortcomings of existing defense methods in balancing robustness and usability.}

\begin{figure}[t]
\begin{center}
\centerline{\includegraphics[width=\columnwidth]{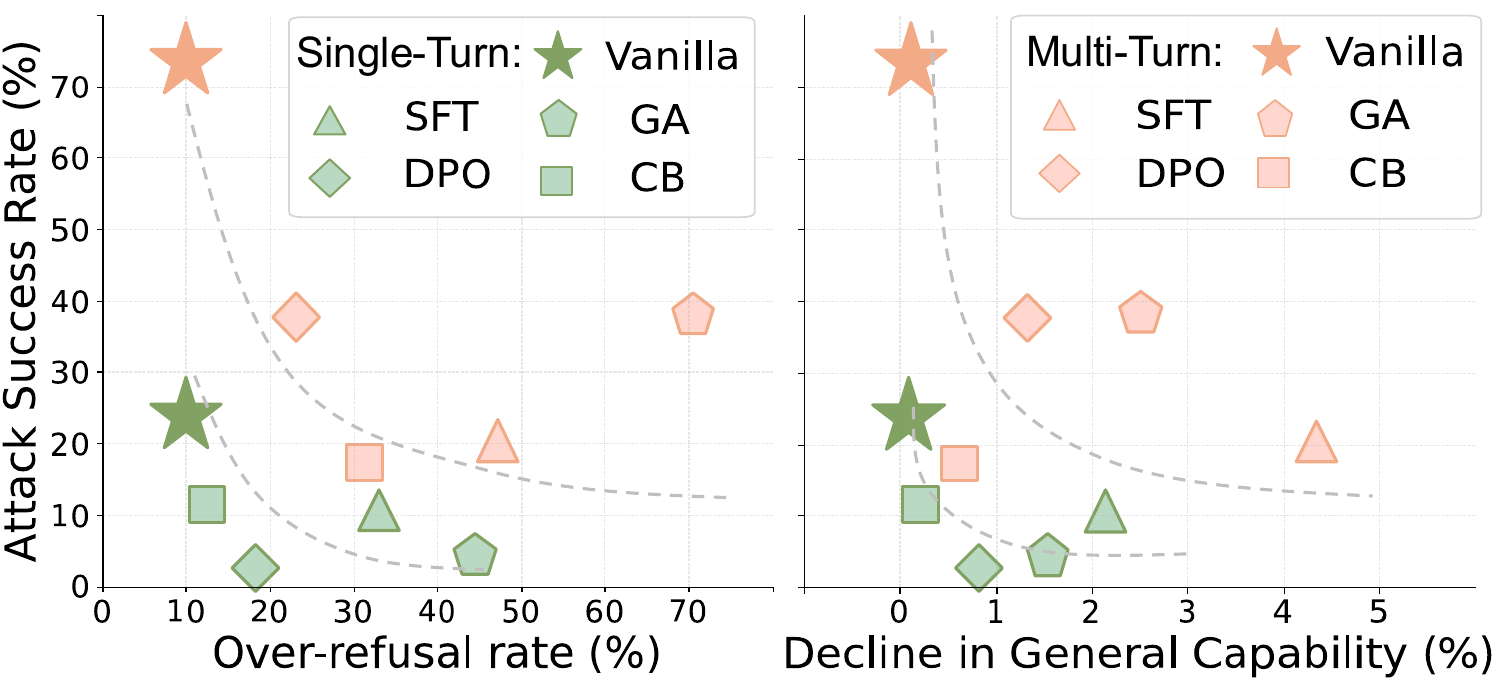}}
\setlength{\abovecaptionskip}{0.1in} 
\caption{The trade-off between defense robustness and LLM usability on Qwen2.5-7B-Instruct. The green points are trained with only single-turn defense data. The red points are trained with single-turn and multi-turn defense data.} 
\label{fig:pre_trade_off}
\end{center}
\vskip -0.3in
\end{figure}
% \subsection{Existing Defense Methods Compromise LLM Usability}
% Implementation and Evaluation of Existing Defense Methods in Multi-turn Scenarios.}}
\vspace{+1mm}
\noindent\textbf{Existing defense methods are suffering from a trade-off, where defense robustness improves while LLM usability declines.}
Fig~\ref{fig:pre_trade_off} shows that existing methods can effectively reduce the ASR of jailbreak attacks after training with the aforementioned data.
% Among these methods, multi-turn SFT and CB demonstrate outstanding defense performance on both Llama3-8B and Qwen2.5-7B.
However, SFT, DPO, and GA even tend to severely compromise general capabilities when achieving good performance, commonly referred to as the ``alignment tax'' \cite{alignment_tax}. 
For instance, SFT results in about 5\% decrease in coding abilities.
Moreover, all of these methods lead to severe over-refusal problems.
In particular, the average ORR increases to more than 50\% after GA.
The high ORR reflects that these methods cannot precisely distinguish harmful queries and build effective defense mechanisms for them. 
Instead, they simply reduce the ASR by indiscriminately rejecting input queries, which is not trustworthy and undermines the model’s usability in real-world scenarios.
Therefore, it is necessary to analyze the cause of usability decline and propose a more precise defense method to mitigate it while preserving robustness against jailbreaks.

% \subsection{Multi-Turn Defense Significantly Exacerbates Over-Refusal}
% \label{sec:existing_eval}
% \vspace{+1mm}
% \noindent\textbf{\textcolor{blue}{Defending against multi-turn jailbreak attacks remains a challenging task.}} 
% After fine-tuning with only single-turn defense data, . please see for more details.
\vspace{+1mm}
\noindent\textbf{Multi-Turn defense significantly exacerbates the trade-off.}
Fig.~\ref{fig:pre_trade_off} shows that multi-turn attacks achieve higher ASR than single-turn attacks on the vanilla model, indicating that multi-turn defense is particularly challenging~\cite{scaleAI_multi,crescendo}.
After incorporating multi-turn defense data into the training set, the data points in Fig.~\ref{fig:pre_trade_off} overall shift towards the upper right, illustrating the increased difficulty in balancing defense robustness and LLM usability in multi-turn scenarios.
% demonstrating that it is more challenging to balance defense robustness and LLM usability in multi-turn defense scenarios.
%
Notably, the average ORR increases by 25.65\% following multi-turn GA, while the decline in coding capability grows by 2.24\% after multi-turn SFT.
These findings highlight that the trade-off issue, especially in multi-turn scenarios, cannot be overlooked and demands urgent resolution.

\vspace{+1mm}
\noindent\textbf{Existing over-refusal mitigation methods fail to resolve the trade-off.}
To further explore the trade-off issue, we implement three existing over-refusal mitigation methods: System Prompt (SP) ~\cite{oktest}, Self-CD~\cite{oktest}, and Vector Ablation (VA)~\cite{wang2024surgical}. 
As shown in Table~\ref{tab:over_refusal_mitigation_defense} in Appendix~\ref{app:over_refusal_mitigation}, their effectiveness in reducing ORR is not noticeable in models fine-tuned with defense methods, and they substantially compromise defense robustness.
Specifically, SP, Self-CD, and VA lead to increases of 20\%, 7.5\%, and 22.5\% in multi-turn ASR, respectively, highlighting that they cannot reconcile the trade-off between minimizing ASR and maintaining usability.
% Moreover, the data points for Self-CD appear in the upper right region of the fitted curve, indicating that it can not reconcile the trade-off between ASR and usability decline.
\section{X-Boundary: Optimize Exact Boundary to Balance Robustness and Usability}
In this section, we propose X-Boundary to mitigate the trade-off between defense robustness and LLM usability by explicitly formulating the distinction boundary.
Section~\ref{sec:formulation} analyzes the essential mechanism of decline in usability. Section~\ref{sec:loss} introduces the optimization objective of X-Boundary. Section~\ref{sec:theoretical} theoretically proves that X-Boundary may ease the learning difficulty and contribute to fast learning.
% can accelerate the convergence process. 

\subsection{The Imprecise Distinction Boundary of Existing Multi-Turn Defense Methods.}
\label{sec:formulation}
% \textbf{The decline in usability is caused by an imprecise distinction boundary.}
%
\textbf{Notations.} Give an input data point $x$, $\mathcal{R}_{\mathcal{M}} \left(x \right)$ denotes its feature representations encoded by LLMs $\mathcal{M}$.
$\{x_i\}_{i=1}^N$ and $\{\mathcal{R}_{\mathcal{M}} \left(x_i \right)\}_{i=1}^N$ denote a set of multiple data points and representations, respectively.
In particular, $x_i^h$ represents a harmful Query and its corresponding harmful Answer (QA pair), while $x_i^r$ denotes the refusal response to the harmful query $x_i^h$.
$x_i^s$ and $x_i^b$ denote a safe QA pair and a boundary-safe QA pair, respectively, where the answer is both safe and helpful.
% x is a sequence of tokens, a vector; R is a matrix (token_len * hidden_size) 

\vspace{+1mm}
\noindent\textbf{Analysis of safety-usability trade-off from the perspective of interpretability mechanism.} Existing defense methods~\cite{circuit_breaker,zou2023representation} typically improve the adversarial robustness of LLMs by intervening harmful feature representations $\{\mathcal{R}_{\mathcal{M}} \left(x_i^h \right)\}_{i=1}^N$.
% $S_h=\{(q_i^h, a_i^h)\}$
% yielded by harmful Questions and harmful Answer (QA) pair $x_h$.
% Questions and Answer (QA) pairs
Specifically, SFT~\cite{decoupled_sft} and CB~\cite{circuit_breaker} remap harmful representations to refusal representations $\mathcal{R}_{\mathcal{M}} \left(x_i^r \right)$.
% \wrt refusal response $x_i^r$
% \textcolor{red}{incoherent representations} or
%
In this process, these methods implicitly train LLMs to learn a boundary that distinguishes harmful representations and safe representations $\{\mathcal{R}_{\mathcal{M}} \left(x_i^s \right)\}_{i=1}^N$.
However, Fig.~\ref{figs:baseline_tsne} shows that \textbf{the boundary learned through this implicit training is imprecise}, with some boundary-safe representations $\{\mathcal{R}_{\mathcal{M}} \left(x_i^b \right)\}_{i=1}^N$ mixed with harmful representations rather than being clearly distinguished.
% it is difficult for LLMs to learn an accurate boundary through this implicit training, as shown in .
%
In this way, these boundary-safe representations are mistakenly treated as harmful ones, leading LLMs to refuse the corresponding boundary-safe queries and ultimately reducing usability.
% As a result, LLMs refuse to answer the corresponding boundary-safe queries, leading to a decline in usability.}
% Fig. \ref{figs:repe_move} shows that there exist boundary-safe QA pairs $\{x_i^b\}_{i=1}^N$ whose representations $\{\mathcal{R}_{\mathcal{M}} \left(x_i^b \right)\}_{i=1}^N$ are mixed with $\{\mathcal{R}_{\mathcal{M}} \left(x_i^h \right)\}_{i=1}^N$, making it challenging to define an erasure boundary that exactly separates $\{\mathcal{R}_{\mathcal{M}} \left(x_i^h \right)\}_{i=1}^N$ and $\{\mathcal{R}_{\mathcal{M}} \left(x_i^b \right)\}_{i=1}^N$.
% As a result, partial $\{\mathcal{R}_{\mathcal{M}} \left(x_i^b \right)\}_{i=1}^N$ are included in the erasure region and are unintentionally disrupted, causing LLMs to refuse the corresponding boundary-safe queries.}
% We refer to these part of safe representations as boundary safe representations.
% As a result, these boundary safe representations are divided into the erasure region and are unintentionally disrupted.
% Therefore, LLMs will refused to answer.
%
\begin{figure}[t]
\begin{center}
\centerline{\includegraphics[width=\columnwidth]{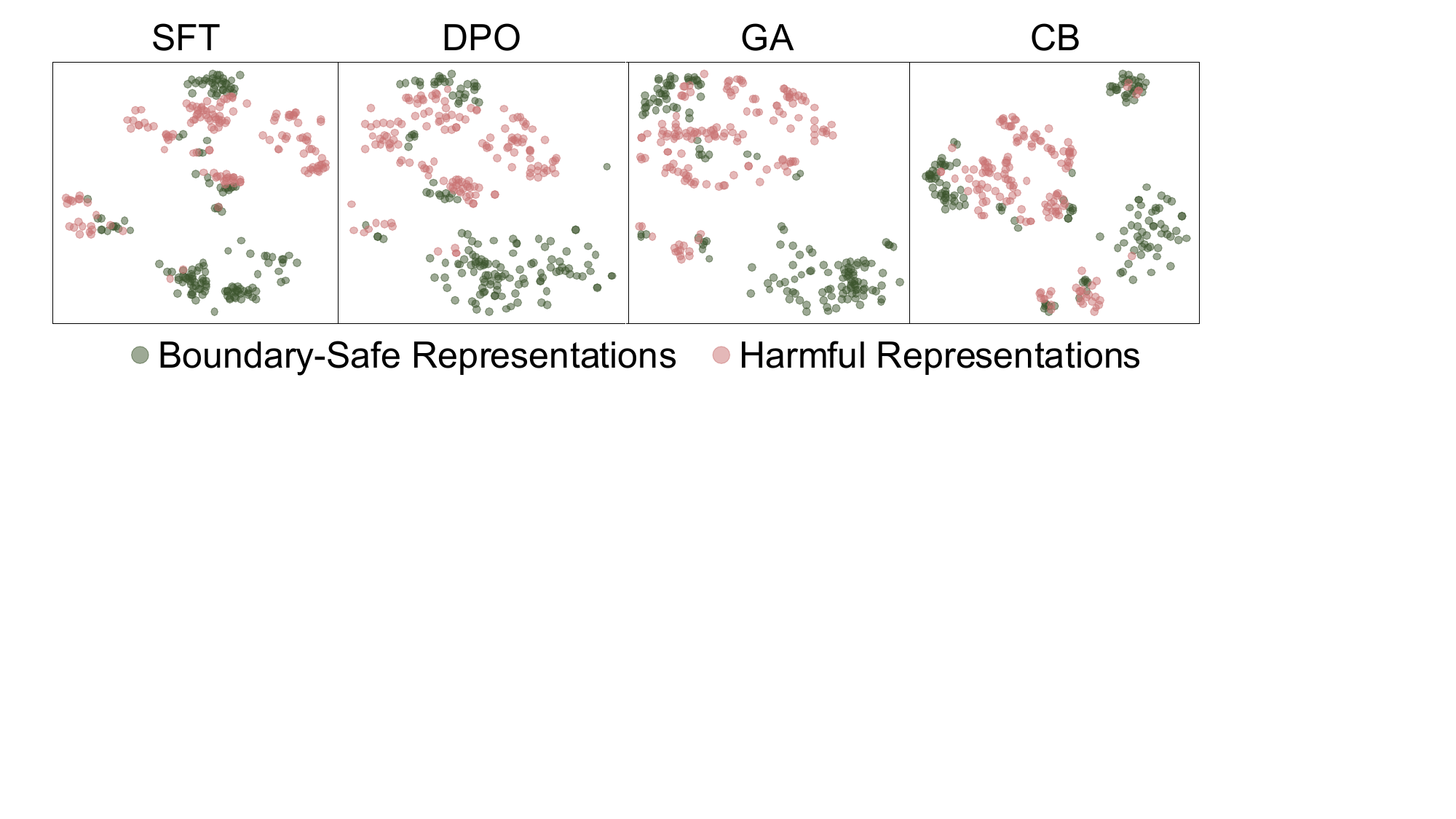}}
\caption{Visualization of the representation distribution after implementing SFT, DPO, GA, and CB. ``Harmful'' and ``boundary-safe'' refer to the representations of harmful and boundary-safe queries along with their corresponding responses, respectively.} 
\label{figs:baseline_tsne}
\end{center}
\vskip -0.4in
\end{figure}

\noindent\subsection{Explicit Formulation for Distinction Representation Boundary}
% Optimization Objective of X-Boundary}
\label{sec:loss}
\textbf{We propose X-Boundary to explicitly formulate the distinction boundary between safe and harmful representations.}
% \paragraph{We propose X-Boundary to achieve an erasure boundary exactly divide safe representations and harmful representations.}
%
The key idea is to push harmful representations far away from boundary-safe representations through an explicit loss function, such that harmful representations can be effectively and precisely erased without disrupting safe ones.
In this way, a balance between defense robustness and LLM usability can be achieved.

% based on CB to achieve win-win outcome between robust defense and minor over-refusal.
% boundary safe的定义
Specifically, we construct a separate set $D_\texttt{s}$ for separating harmful and boundary-safe representations, an erase set $D_\texttt{e}$ to contain harmful knowledge that should be erased, and a retain set $D_\texttt{r}$ for preserving safe knowledge related to the usability of LLMs. 
% To this end, the dataset used in X-boundary can be partitioned into the Erase Set $S_\texttt{e}$, the Retain Set $S_\texttt{r}$, and the Separate Set $S_\texttt{s}$.
%
To this end, $D_\texttt{r}$ includes safe QA pairs $\{x_i^s\}_{i=1}^N$, boundary-safe QA pairs $\{x_i^b\}_{i=1}^N$, and refusal responses to harmful queries $\{x_i^r\}_{i=1}^N$.
% boundary-safe 和 safe的区别
% $x_s, x_b, x_r \in S_\texttt{r}$.
%
$D_\texttt{e}$ consists of harmful QA pairs: $D_\texttt{e} = \{x_i^h\}_{i=1}^N$.
$D_\texttt{s}$ contains pairs of $x_b$ and $x_r$: $D_\texttt{s} = \{\left(x_i^b, x_i^r\right)\}_{i=1}^N$.
% To this end, we initially collect a Boundary Set $S_\texttt{boundary}$ containing pairs of boundary safe queries and responses $x_b$.
\begin{figure}[t]
\begin{center}
\centerline{\includegraphics[width=0.9\columnwidth]{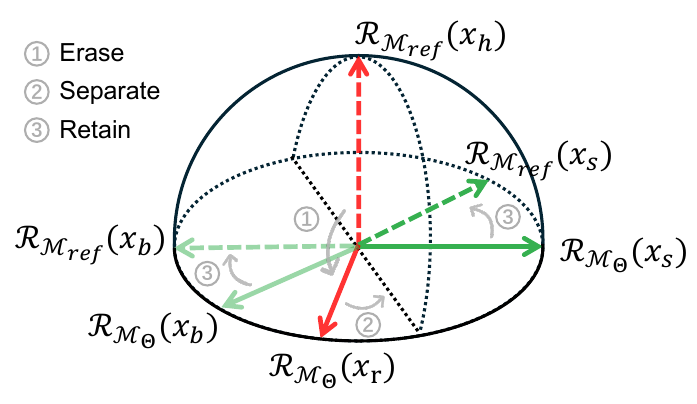}}
% \vskip -0.1in
\caption{Illustration of representation manipulation in X-Boundary for a clear distinction boundary.}
\label{fig:loss_function}
\end{center}
\vskip -0.3in
\end{figure}

\vspace{+1mm}
\textbf{To explicit formulate a precise distinction boundary,} we propose separate loss $\mathcal{L}_\texttt{s}$ to increase the distance $\mathcal{D}$ between harmful representations $\{\mathcal{R}_{\mathcal{M}_\theta} \left(x_i^h \right)\}_{i=1}^N$ and boundary-safe representations $\{\mathcal{R}_{\mathcal{M}_{\texttt{ref}}} \left(x_i^b \right)\}_{i=1}^N$.
Since most $\{\mathcal{R}_{\mathcal{M}_\theta} \left(x_i^h \right)\}_{i=1}^N$ will be remapped to $\{\mathcal{R}_{\mathcal{M}_\theta} \left(x_i^r \right)\}_{i=1}^N$ due to the following erasure operation, we can separate them by directly optimizing $\mathcal{R}_{\mathcal{M}_\theta}\left(x_i^r \right)$ to be orthogonal to $\mathcal{R}_{\mathcal{M}_{\texttt{ref}}}\left(x_i^b \right)$ as shown in Fig.~\ref{fig:loss_function}:
\begin{equation}
    \small 
    \mathcal{L}_\texttt{s} = \frac{1}{\left|D_s\right|}\sum_{i=1}^{\left|D_s\right|}\texttt{ReLU} \left(\texttt{cos}\left(\mathcal{R}_{\mathcal{M}_\theta} \left(x_i^r \right), \mathcal{R}_{\mathcal{M}_{\texttt{ref}}} \left(x_i^b \right) \right) \right) 
% \begin{aligned}
%     \mathcal{D}=&\texttt{cos}\left(\mathcal{R}_{\mathcal{M}_\theta} \left(x_i^r \right), \mathcal{R}_{\mathcal{M}_{\texttt{ref}}} \left(x_i^b \right) \right)\\
%     \mathcal{L}_\texttt{s}=&\frac{1}{\left|S_s\right|}\sum_{i=1}^{\left|S_s\right|}\texttt{ReLU}\left(\mathcal{D}\right) 
% \end{aligned}
\end{equation}
where $\mathcal{M}_{\theta}$ and $\mathcal{M}_{\texttt{ref}}$ denote the model under training and the reference model before training.

\vspace{+1mm}
\textbf{To establish robust defense against multi-turn attacks,} we utilize erase loss $L_{e}$ to erase the representations of harmful QA pairs in $D_\texttt{e}$.
$L_{e}$ optimizes $\mathcal{R}_{\mathcal{M}_{\theta}} \left(x_i^h \right)$ to be orthogonal to their original representations $\mathcal{R}_{\mathcal{M}_{\texttt{ref}}} \left(x_i^h \right)$ following \cite{circuit_breaker}:
\begin{equation}
    \small
    \mathcal{L}_\texttt{e} =\frac{1}{\left|D_e\right|}\sum_{i=1}^{\left|D_e\right|}\texttt{ReLU} \left(\texttt{cos}\left(\mathcal{R}_{\mathcal{M}_{\theta}} \left(x_i^h \right), \mathcal{R}_{\mathcal{M}_{\texttt{ref}}} \left(x_i^h \right)\right) \right) 
\end{equation}
% \mathbb{E}_{x \sim S_{\text{e}}}
% The accompanying losses $\mathcal{L}_\texttt{boundary}$ for the datasets are reroute loss $\mathcal{L}_\texttt{reroute}$, retain loss $\mathcal{L}_\texttt{retain}$, and separate loss $\mathcal{L}_\texttt{separate}$.
% Retain term keeps boundary safe representations close to other safe representations, while term2 pushes harmful representations further away from boundary safe representations.
% $\mathcal{L}_\texttt{reroute}$ optimizes $\mathcal{R}_{\mathcal{M}_{\theta}} \left(x_h \right)$ to be orthogonal to their original representations $\mathcal{R}_{\mathcal{M}_{\texttt{ref}}} \left(x_h \right)$:
%
% The retain term $\mathcal{L}_\texttt{retain}$ maintains $\mathcal{R}_{\mathcal{M}_{\theta}} \left(x_s \right)$ and $\mathcal{R}_{\mathcal{M}_{\theta}} \left(x_b \right)$.

\vspace{+1mm}
\textbf{To preserve usability of LLMs,} we use retain loss $\mathcal{L}_\texttt{r}$ to maintain safe representations of data points in $D_\texttt{r}$.
$\mathcal{L}_\texttt{r}$ minimizes the $\ell_2$ distance between trained representations and their original representations:
% \begin{equation}
%     \mathcal{L}_\texttt{retain} = \left\| \mathcal{R}_{\mathcal{M}_\theta} \left(x_b \right) - \mathcal{R}_\mathcal{M} \left(x_b \right) \right\|_2
% \end{equation}
\begin{equation}
\mathcal{L}_\text{r} = \frac{1}{\left|D_r\right|}\sum_{i=1}^{\left|D_r\right|} \left\| \mathcal{R}_{\mathcal{M}_\theta} \left(x_i \right) - \mathcal{R}_{\mathcal{M}_{\texttt{ref}}} \left(x_i \right) \right\|_2
\end{equation}
where $x_i$ represents a sample in retain set ($x_i \in D_r$).
Notably, to maintain the existing refusal mechanism of LLMs, refusal responses $x_r$ to harmful queries are added into $D_\texttt{r}$.
Therefore, most $\{\mathcal{R}_{\mathcal{M}_\theta} \left(x_h \right)\}_{i=1}^N$ are finally optimized to refusal representations $\{\mathcal{R}_{\mathcal{M}_\theta} \left(x_r \right)\}\}_{i=1}^N$ under the joint effect of $\mathcal{L}_\texttt{e}$ and $\mathcal{L}_\texttt{r}$.

In summary, the overall loss function is a weighted combination of the three aforementioned loss functions:
\begin{equation}
    \mathcal{L}=c_r\mathcal{L}_r + c_e\mathcal{L}_e + c_s\mathcal{L}_s
\end{equation}
where $c_r$, $c_e$ and $c_s$ are adaptive loss coefficients following \cite{circuit_breaker, adaptive_loss}.
With the above optimization objective, X-Boundary can perform fine-grained optimization in the representation space to \textbf{reconcile the trade-off between defense robustness and the usability of LLMs}.
The overall optimization process of X-boundary is shown as Algorithm~\ref{algorithm} in Appendix~\ref{app:algorithm}.

\subsection{Theoretical Analysis of X-Boundary}
\label{sec:theoretical}
In this subsection, we theoretically analyze the convergence rate of LLM from the perspective of the optimal transport theory \cite{solomon2020k, chuang2021measuring, weed2017sharp}. Specifically, we theoretically prove that X-boundary enables a faster learning speed of feature learning, which is verified in Fig.~\ref{fig:loss_plot}.

\vspace{+1mm}
\noindent\textbf{Preliminaries: optimal transport and $k$-variance.} Wasserstein distance measures the distance between probability distributions on a metric space. Let $\mu$ and $\nu \in \Prob(\sR^d)$ denote two probability measures, the definition of $p$-Wasserstein distance with Euclidean cost function is
\begin{equation}
    \gW_p(\mu, \nu) = \inf_{\pi \in \Pi(\mu, \nu)} \left( \E_{(H,Q) \sim \pi} \|H-Q\|^p\right)^{1/p},
\end{equation}
where $\Pi(\mu, \nu) \subseteq \Prob(\sR^d \times \sR^d)$ represent the set of measure couplings and $\mu$ and $\nu$ denote their marginals, respectively. From the perspective of optimal transport, Wasserstein distances indicate the minimal cost of transforming the distribution $\mu$ to $\nu$. Typically, the Earth Mover distance is equivalent to the 1-Wasserstein distance. 

\begin{definition}[Wasserstein-$1$ $k$-variance]
Given a probability measure $\mu \in \Prob(\sR^d)$ and a parameter $k \in \sN$, the \emph{Wasserstein-$1$ $k$-variance} is
\begin{equation}
    \Var_{k}(\mu) =  \E_{S, \tilde{S} \sim \mu^k} \left[ \gW_1(\mu_S, \mu_{\tilde{S}} ) \right],
\end{equation}
where $\mu_S = \frac{1}{k}\sum_{i=1}^k \delta_{x_i}$ for $x_i \overset{\textnormal{i.i.d.}}{\sim} \mu $.
\end{definition}

$k$-variance measures structural properties of distribution beyond variance based on Wasserstein distances \cite{solomon2020k}. We theoretically analyze the learning trend of DNN feature representations, which can be measured by the convergence rate of $k$-variance following \cite{weed2017sharp, solomon2020k}.

%Following the Wasserstein-2 $k$-variance analysis in \cite{solomon2020k}, we apply bounds by \citet{weed2017sharp} to demonstrate the fast convergence rate of Wasserstein-1 $k$-variance when the support is clusterable. 

\begin{proposition} (Proven in Appendix~\ref{ap:proof})
\label{prop_cluster}
 If $\phi_\# \mu$ is $(n, \Delta)$-clusterable, then for all $m \leq n(2\Delta)^{-2}$,
 \begin{equation}
     \Var_{m}(\phi_\# \mu) < 48\Delta.
 \end{equation}
 Given a distribution $\mu$, $(n, \Delta)$-clusterable means that $\textnormal{supp}(\mu)$ lies in the union of $n$ balls of radius at most $\Delta$.
\end{proposition}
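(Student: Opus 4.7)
The plan is to bound the expected Wasserstein-$1$ distance between two empirical measures $\mu_S$ and $\mu_{\tilde S}$ drawn from $\nu := \phi_\# \mu$ by exploiting the cluster structure to split the transport cost into a cheap within-cluster part and a controllable cross-cluster residual.

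First, I would decompose the measure using the clusterable hypothesis: $\supp(\nu) \subseteq \bigcup_{i=1}^{n} B(c_i, \Delta)$, so that $\nu = \sum_{i=1}^n p_i \nu_i$ where each $\nu_i$ is supported in $B(c_i, \Delta)$. Letting $M_i$ and $\tilde M_i$ denote the sample counts in cluster $i$ from $S$ and $\tilde S$ respectively, the vector $(M_1,\dots,M_n)$ follows a multinomial distribution with parameters $(m; p_1,\dots,p_n)$. I would then construct an explicit (sub-optimal) transport plan: first match $\min(M_i,\tilde M_i)$ mass pointwise within each cluster, which costs at most $2\Delta$ per unit of mass because any two points in $B(c_i,\Delta)$ are at distance $\leq 2\Delta$.

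Second, I would bound the unmatched residual mass by multinomial concentration. A straightforward variance bound gives $\mathbb{E}|M_i - \tilde M_i| \leq \sqrt{2 m p_i(1-p_i)}$, so summing and applying Cauchy--Schwarz yields
\begin{equation}
\mathbb{E}\,\frac{1}{m}\sum_{i=1}^{n}|M_i-\tilde M_i| \;\leq\; \sqrt{\tfrac{2n}{m}}.
\end{equation}
Under the hypothesis $m \leq n(2\Delta)^{-2}$, this right-hand side is $\leq 2\sqrt{2}\,\Delta$, so the total mass that must be routed across clusters is itself $O(\Delta)$.

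Third, I would convert this residual mass bound into an actual $W_1$ cost. This is the main obstacle: the naive estimate multiplies the residual mass by the diameter of the support, which is not controlled by $\Delta$ alone. The plan is to exploit the symmetry of the $k$-variance—since $S$ and $\tilde S$ are i.i.d. from the same $\nu$—by constructing a symmetric coupling in which cross-cluster transport in one direction is matched by a transport in the opposite direction, so that after rerouting every cross-cluster unit can be replaced by an equivalent within-cluster unit of cost $2\Delta$. Adding the within-cluster contribution ($\leq 2\Delta$) to the rerouted residual contribution ($\leq 2\Delta \cdot 2\sqrt{2}\Delta \cdot (\text{constant})$ absorbed into the constant 48 via the hypothesis $m \leq n(2\Delta)^{-2}$) gives the desired bound $\operatorname{Var}_m(\nu) < 48\Delta$. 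I expect the symmetric coupling step to be the most delicate part of the argument, since it is what allows the bound to depend on $\Delta$ rather than on the separation between cluster centers.
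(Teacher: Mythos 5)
Your overall route is genuinely different from the paper's: the paper builds no coupling at all, but simply invokes Proposition 13 of Weed and Bach (for an $(n,\Delta)$-clusterable measure, $\E\,\mathcal{W}_1(\mu,\mu_S)\le 12\sqrt{n/m}$ whenever $m\le n(2\Delta)^{-2}$) and then applies the triangle inequality to get $\Var_m(\phi_\#\mu)\le 2\,\E\,\mathcal{W}_1(\phi_\#\mu,\phi_\#\mu_S)\le 24\sqrt{n/m}$, with the number $48\Delta$ read off at the threshold $m=n(2\Delta)^{-2}$. Your steps 1--2 (within-cluster matching at cost at most $2\Delta$ per unit mass, plus the multinomial bound $\E\,\tfrac1m\sum_i|M_i-\tilde M_i|\le\sqrt{2n/m}$) are correct and essentially re-derive the sampling-error part of that citation by hand.

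There are, however, two genuine gaps. First, the step you yourself flag as delicate does not work: the unmatched mass really must travel between clusters, and no symmetry or rerouting argument can convert that into within-cluster moves of cost $2\Delta$. Concretely, let $\nu$ be supported on two points at distance $D$, each of mass $1/2$; this is $(2,\Delta)$-clusterable for every $\Delta>0$, yet $\Var_1(\nu)=D/2$, which is not $O(\Delta)$. Hence any coupling argument that charges cross-cluster transport at $2\Delta$ per unit, independently of $m$ and of the inter-cluster geometry, proves a false statement; the cross-cluster cost must be charged at the diameter of the support (Weed--Bach normalize the space to diameter at most one), and $\Delta$ can only enter through the residual mass being small, i.e.\ through a largeness condition on $m$. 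Second, your conversion of the residual bound is backwards: $m\le n(2\Delta)^{-2}$ gives $n/m\ge(2\Delta)^2$, hence $\sqrt{2n/m}\ge 2\sqrt{2}\,\Delta$, not $\le$. (The paper's final line ``$24\sqrt{n/m}<48\Delta$'' has the same direction issue under the stated hypothesis; the bound $48\Delta$ really comes from evaluating $24\sqrt{n/m}$ at the threshold $m=n(2\Delta)^{-2}$, or for larger $m$ via monotonicity of $\E\,\mathcal{W}_1(\mu,\mu_S)$ in $m$. But the paper's intermediate inequality $\Var_m\le 24\sqrt{n/m}$ is correct and citable, whereas in your write-up the argument collapses once the rerouting step and the inequality direction are repaired, because your cross-cluster cost is then controlled neither by $\Delta$ nor by anything else.)
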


 Proposition~\ref{prop_cluster} indicates that $\Var_{m}(\phi_\# \mu)$ is bounded by the radius $\Delta$, reflecting the concentration of the feature distribution. In this way, the proposed X-Boundary enables more clustered features (the smaller radius $\Delta$) and a faster learning speed (the smaller $k$-variance $\Var_{m}(\phi_\# \mu)$). 

\vspace{+1mm}
\noindent\textbf{Experimental Verification.} Fig.~\ref{fig:loss_plot} verifies that X-Boundary enables a faster learning speed of the training process. To this end, we fine-tune Llama-3-8B-Instruct and Qwen2.5-7B-Instruct following the settings in Section~\ref{sec:comparison}. Specifically, we set 0.1 and 0.55 of the training loss as thresholds to judge whether the training process has converged for Llama-3-8B-Instruct and Qwen2.5-7B-Instruct, respectively. Based on this, Fig.~\ref{fig:loss_plot} indicates that the proposed X-Boundary accelerates the converging process of 26.47\% and 18.29\% on Llama-3-8B-Instruct and Qwen2.5-7B-Instruct, respectively.
\begin{figure}[t]
\begin{center}
\centerline{\includegraphics[width=\columnwidth]{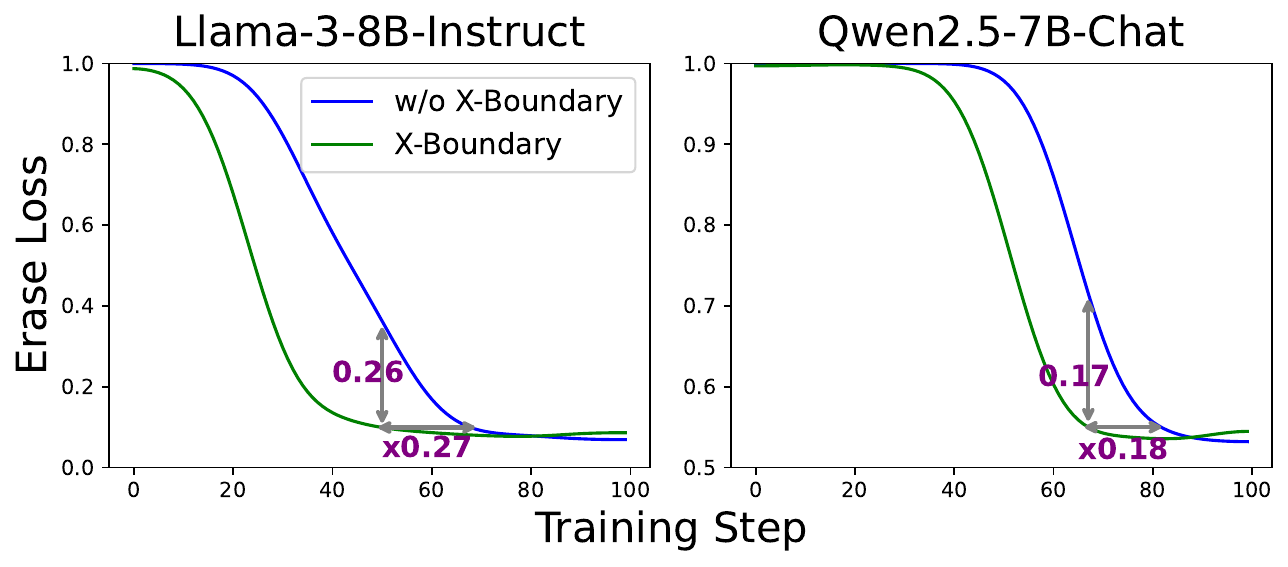}}
\vskip -0.1in
\caption{The training curves of X-Boundary and without X-Boundary on Llama-3-8B-Instruct and Qwen2.5-7B-Instruct.}
\label{fig:loss_plot}
\end{center}
\vskip -0.4in
\end{figure}
\section{Experiments}
\begin{table*}[t]
% \captionsetup{position=above} % 让 caption 显示在表格上方
% \vskip 0.1in
\setlength{\tabcolsep}{1pt}
\centering
\resizebox{\linewidth}{!}{
\begin{tabular}{c|ccccccccccccc}
%\Xhline{4\arrayrulewidth}
\midrule
\multirow{2.5}{*}{\textbf{Methods}} & \multicolumn{3}{c}{\textbf{Single-Turn ASR (\%) $\downarrow$}} & \multicolumn{3}{c}{\textbf{Multi-Turn ASR (\%) $\downarrow$}} & \multicolumn{4}{c}{\textbf{Over-Refusal Rate (\%) $\downarrow$}} & \multicolumn{3}{c}{\textbf{General Capability (\%) $\uparrow$}} \\
\cmidrule(lr){2-4} \cmidrule(lr){5-7} \cmidrule(lr){8-11} \cmidrule(lr){12-14}
% Model & Single Round ASR (\%) & Multi Round ASR (\%) & 
~ & ~~GCG~~~ & PAIR & PAP & ActorAttack & RedQueen & Crescendo & XSTest & OKTest & OR-Bench & PHTest & MMLU & GSM8K & HumanEval \\
\midrule
\multicolumn{14}{c}{Llama-3-8B-Instruct} \\
\midrule
Vanilla & 31.00 & 18.00 & 15.00 & 58.50 & 25.00 & 34.00 & 6.80 & 9.00 & 8.00 & 13.67 & 68.30 & 79.08 & 59.18 \\
\midrule
SFT & 6.50 & 13.50 & \underline{1.50} & 19.50 & \textbf{0.50} & \textbf{8.00} & 27.20 & 42.33 & 22.00 & 57.33 & \underline{68.17} & 76.19 & 54.27 \\
DPO & 8.50 & \underline{11.00} & 3.00 & \underline{17.50} & 5.00 & 14.00 & 20.00 & 28.33 & 17.33 & 41.00 & 68.01 & 75.59 & 58.54 \\
GA & 18.00 & 11.50 & 3.50 & 38.50 & 1.50 & 12.00 & \underline{10.80} & \underline{15.00} & \underline{13.33} & \underline{35.33} & \textbf{68.25} & 77.86 & \textbf{62.20} \\
CB & \underline{2.00} & 12.00 & \textbf{1.00} & \textbf{16.50} & \textbf{0.50} & \underline{10.00} & 23.60 & 27.67 & 36.00 & 52.00 & 67.66 & \underline{78.47} & \underline{59.76} \\
\midrule
X-Boundary & \textbf{1.50} & \textbf{10.00} & \textbf{1.00} & \textbf{16.50} & \underline{1.00} & \underline{10.00} & \textbf{8.40} & \textbf{14.00} & \textbf{8.00} & \textbf{28.67} & 67.94 & \textbf{78.70} & \underline{59.76} \\
\midrule
\multicolumn{14}{c}{Qwen2.5-7B-Instruct} \\
\midrule
Vanilla & 76.00 & 48.50 & 51.50 & 76.00 & 39.50 & 62.00 & 6.00 & 19.33 & 1.67 & 5.67 & 74.26 & 80.67 & 81.71 \\
\midrule
SFT & 48.50 & 39.50 & 15.50 & 21.00 & 6.00 & 18.00 & 46.00 & 57.67 & 29.33 & 53.67 & \underline{74.30} & 76.42 & 77.44 \\
DPO & 46.50 & 48.00 & 21.50 & 38.00 & 12.00 & 24.00 & 21.60 & \underline{25.67} & \underline{11.67} & \underline{32.33} & 73.63 & \textbf{80.97} & \underline{80.49} \\
GA & 54.00 & 35.00 & \underline{9.50} & 38.00 & 21.00 & \textbf{12.00} & 58.33 & 70.00 & 67.67 & 85.33 & \textbf{74.58} & 80.43 & 79.27 \\
CB & \textbf{22.00} & \underline{27.50} & 10.50 & \textbf{15.50} & \textbf{5.50} & \textbf{12.00} & \underline{20.60} & 26.00 & 34.00 & 43.67 & 74.21 & 80.36 & \textbf{81.10} \\
\midrule
X-Boundary & \underline{23.00} & \textbf{26.00} & \textbf{8.50} & \underline{17.50} & \underline{7.50} & \underline{16.00} & \textbf{10.40} & \textbf{16.67} & \textbf{5.33} & \textbf{15.00} & 74.17 & \underline{80.52} & \textbf{81.10} \\
% \Xhline{4\arrayrulewidth}
\midrule
\multicolumn{14}{c}{Mistral-7B-Instruct-v0.2} \\
\midrule
Vanilla & 83.50 & 60.50 & 61.00 & 70.00 & 49.50 & 40.00 & 10.00 & 21.00 & 4.33 & 13.00 & 59.98 & 45.34 & 34.76 \\
\midrule
SFT & 38.50 & 48.00 & 34.00 & 37.50 & 22.00 & 18.00 & 53.60 & 42.00 & 29.33 & 58.67 & 58.94 & 41.55 & 27.44 \\
DPO & 36.00 & 47.00 & 42.50 & 44.50 & 19.00 & 28.00 & \underline{25.20} & \underline{38.67} & \underline{20.33} & \underline{37.67} & 58.79 & 43.21 & \underline{34.76} \\
GA & 48.00 & \textbf{32.50} & \textbf{25.00} & 24.00 & \textbf{9.00} & \textbf{10.00} & 38.40 & 50.67 & 35.67 & 71.33 & \textbf{60.13} & 45.00 & \underline{34.76} \\
CB & \textbf{31.00} & 36.50 & 30.50 & \textbf{15.00} & \underline{11.50} & \underline{12.00} & 45.20 & 39.33 & 55.00 & 50.00 & \underline{59.91} & \textbf{46.63} & 33.54 \\
\midrule
X-Boundary & \underline{34.50} & \underline{35.00} & \underline{30.00} & \underline{16.00} & 13.50 & 14.00 & \textbf{19.20} & \textbf{23.33} & \textbf{10.34} & \textbf{26.33} & 59.83 & \underline{45.34} & \textbf{36.59} \\
% \Xhline{4\arrayrulewidth}
\midrule
\end{tabular}}
\setlength{\abovecaptionskip}{0.1in} % 让 caption 和表格更紧凑
\caption{Comparison of existing defense methods and X-Boundary.}
\label{tab:main_results}
\vspace{-8pt}
\end{table*}
\subsection{Experimental Settings}
To ensure fairness in comparison and consistency in experimental settings, we implement four baseline methods and X-Boundary on Llama-3-8B-Instruct, Qwen2.5-7B-Instruct, and Mistral-7B-Instruct-v0.2, and evaluate them using HarmBench dataset~\cite{harmbench} and the metrics described in Section~\ref{sec:comparison}.
Additionally, to assess the effectiveness of X-Boundary across different sizes of LLMs, we implement it on Qwen2.5-14B-Instruct.
To construct the Separate Set, we sample 500 boundary-safe prompts from OR-Bench-80K \cite{orbench}, which have been filtered to avoid data contamination with the test set of OR-Bench. 
Next, we use GPT-4o to generate safe and helpful responses for these prompts, thus we get boundary-safe QA pairs.
The retain set consists of boundary-safe QA pairs, UltraChat \cite{ultrachat}, and refusal data points generated by the trained LLMs themselves.
The erase set includes the harmful QA pairs for single-turn defense used in \citet{circuit_breaker} and the harmful QA pairs for multi-turn defense described in Section~\ref{sec:comparison}.
Evaluation and implementation details of X-Boundary are listed in Appendix~\ref{app:eval} and~\ref{app:x_training}, respectively.
\subsection{Main Results}

\vspace{+1mm}
\noindent\textbf{The explicit formulation for boundary contributes to the precise distinction between harmful and safe representations.}
% assist in accurately distinguishing between harmful and safe representations.
To investigate the effect of the explicit formulation for distinction boundary, we visualize the representation distribution of X-Boundary and without X-Boundary. 
Fig.~\ref{figs:t-sne} shows that, without X-Boundary, the boundary-safe representations close to harmful representations are mistakenly regarded as harmful ones.
% , and the corresponding boundary-safe queries are over-refused. 
%
This demonstrates that LLMs fail to learn a boundary that exactly distinguishes safe and harmful representations, which supports our motivation of explicitly formulating the distinction boundary.
With X-Boundary, harmful representations and boundary-safe representations are clearly separated as shown in Fig.~\ref{figs:t-sne}, verifying that the proposed explicit formulation contributes to establishing a precise distinction boundary. 
Please refer to Appendix~\ref{app:repe_angle} and~\ref{app:complete_tsne} for more detailed visualization of the representation distribution.
% And most boundary-safe queries are properly answered by LLMs, verifying that separate loss contributes to establish a precise distinction boundary.
% The distinction boundaries of other baseline methods are discussed in Appendix \ref{}.}

\vspace{+1mm}
\noindent\textbf{X-Boundary maintains the lowest ORR while achieving SOTA defense against both single-turn and multi-turn jailbreaks.}
With a precise distinction boundary, X-Boundary relatively reduces single-turn and multi-turn ASR by more than 40\% while maintaining the increase in ORR on OKTest within 5\% across three LLMs, as shown in Table~\ref{tab:main_results}.
Specifically, on Llama-3-8B-Instruct, CB and X-Boundary both achieve the lowest ASR against ActorAttack, but X-Boundary demonstrates an average ORR that is lower by 20.05\%. 
Similarly, on Qwen2.5-7B-Instruct, X-Boundary’s average ORR is 58.50\% lower than GA, which achieves the lowest ASR against Crescendo.

\vspace{+1mm}
\noindent\textbf{X-Boundary rarely declines general capability.}
Table~\ref{tab:main_results} shows that the decline of general capabilities caused by X-Boundary is generally no more than 0.5\% compared to vanilla models, across the domains of general knowledge, mathematical ability, and coding ability.
In contrast to SFT, which causes a 7\% reduction in coding ability for Mistral-7B-Instruct-v0.2, X-Boundary achieves a lower ASR without compromising coding capability.
More evaluations of single-turn defense are listed in Appendix~\ref{app:single_turn_asr}.

\vspace{+1mm}
\noindent\textbf{X-Boundary successfully alleviates the trade-off between robustness and usability.}
As a supplement to Table~\ref{tab:main_results}, Fig.~\ref{fig:trade_off} intuitively illustrates the trade-off between ASR against jailbreaks and ORR.
% As the training process advances, the ASR steadily decreases, whereas the ORR progressively increases.
% It is inappropriate to focus solely on excelling in a single metric.
% Considering the two metrics comprehensively, X-Boundary appears in the lower-left corner of Fig.~\ref{fig:trade_off}, indicating that it achieves a better balance compared to the baseline methods.
% achieves both low ASR and low ORR in the meantime.
Considering the two metrics comprehensively, X-Boundary appears in the lower-left corner of Fig.~\ref{fig:trade_off} and increases the hypervolume, \ie the volume of the dominated space between the Pareto front and a predefined reference point, by 13.13\% and 10.03\% in OKTest and PHTest, respectively. 
% ~\footnote[1]{Hypervolume measures the volume of the dominated space between the Pareto front and a predefined reference point (OKTest: (70, 60); PHTest: (100, 60)), with larger hypervolume values indicating better overall trade-offs among competing objectives.}
The results indicate that X-Boundary significantly advances the Pareto frontier and mitigates the trade-off between ASR and ORR compared to the baseline methods.
In the same way, Fig.~\ref{fig:asr_utility} in Appendix~\ref{app:asr_utility} demonstrates that X-Boundary also achieves a win-win outcome with robust defense and strong general capability.
For specific cases of the defense performance and usability preservation of X-Boundary, please refer to Appendix~\ref{app:case_study}.
% Table \ref{tab:main_results} demonstrates that X-Boundary achieve the lowest or comparable ASR in multi-turn jailbreak scenarios, maintain the lowest ORR, and preserve general capabilities with minimal impact. 
% 不能只追求低ASR，ORR如果过高，ASR就几乎失去了意义

\vspace{+1mm}
\noindent\textbf{X-Boundary is effective across different sizes of LLMs.}
Table~\ref{tab:large_model_results} in Appendix~\ref{app:large_model_results} shows that, on Qwen2.5-14B-Instruct, X-Boundary relatively reduces the ASR by more than 60\%, while keeping the increase in ORR within 5\% compared to the vanilla model.
Although X-Boundary and CB achieve comparable ASRs, the ORR of X-Boundary is approximately 40\% lower than that of CB.
Compared with the performance on Qwen2.5-7B-Instruct, those of X-Boundary on Qwen2.5-14B-Instruct is stable and has not decreased.
\begin{figure}[t]
\begin{center}
\centerline{\includegraphics[width=\columnwidth]{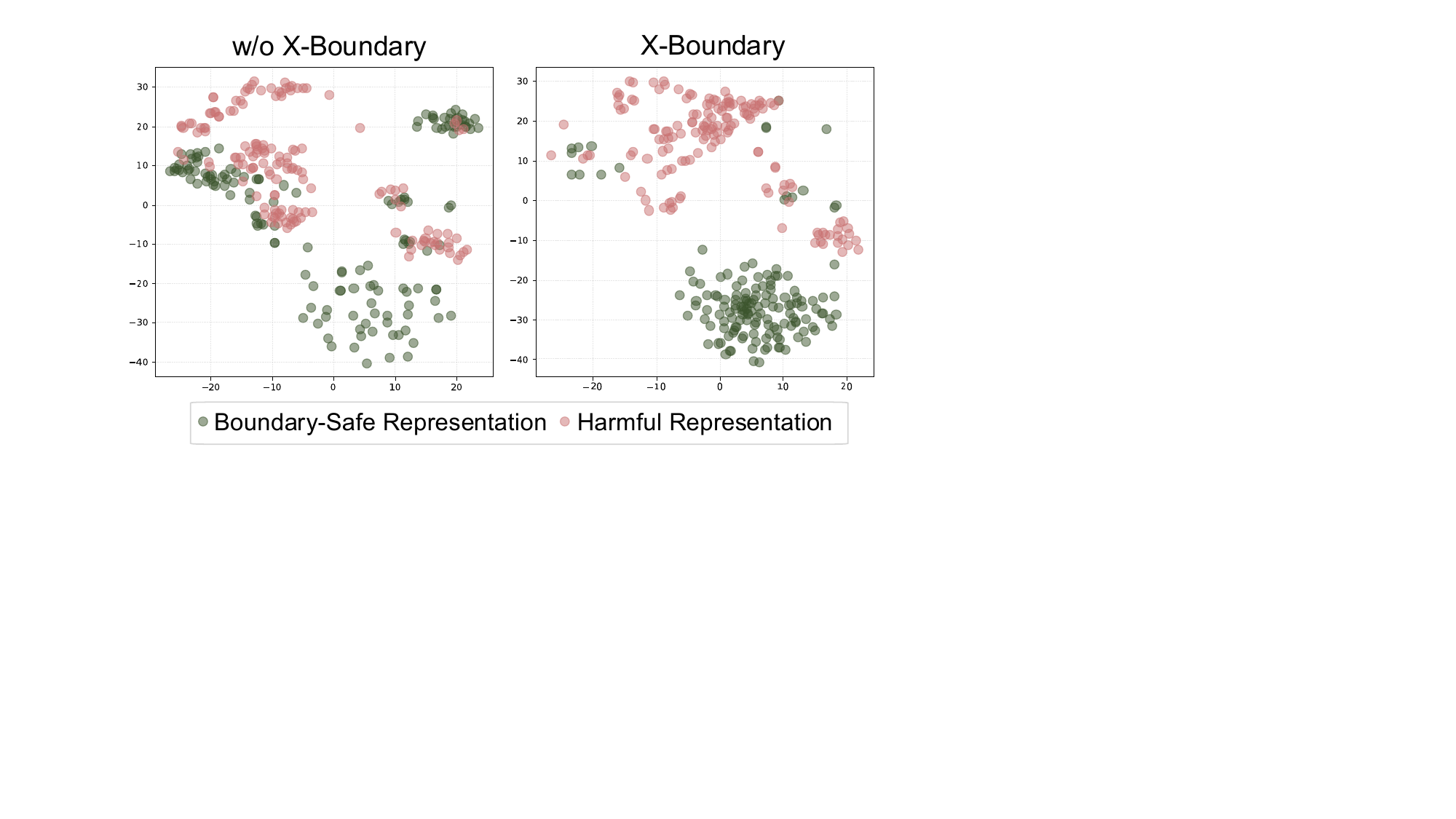}}
\setlength{\abovecaptionskip}{0.1in} 
\caption{Visualization of the representation distribution of X-Boundary and without X-Boundary.} 
\label{figs:t-sne}
\end{center}
\vskip -0.3in
\end{figure}
\renewcommand{\thesubfigure}{}
\begin{figure}[t]
\begin{center}
\centerline{\includegraphics[width=\columnwidth]{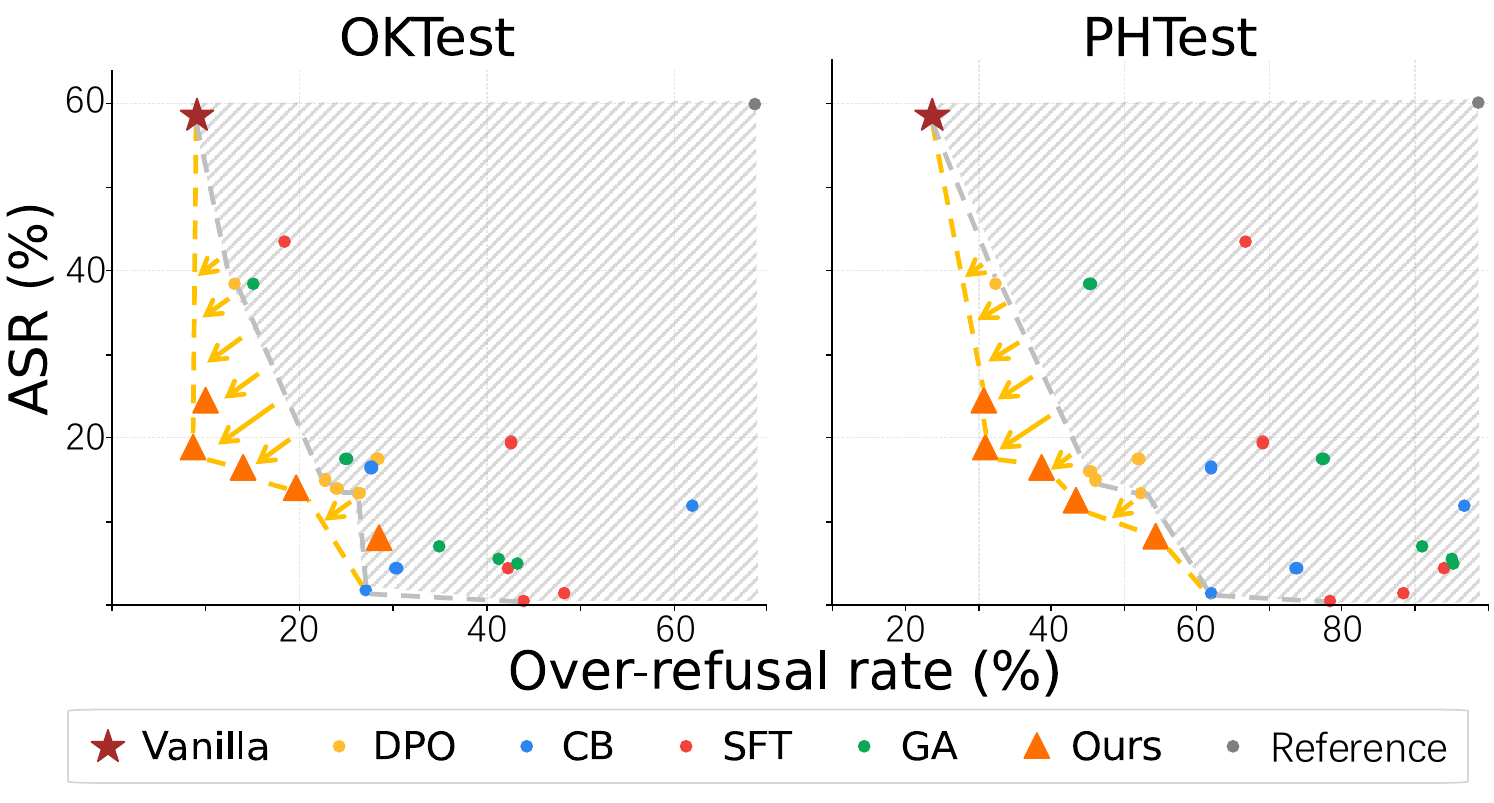}}
\setlength{\abovecaptionskip}{0.1in} 
\caption{The trade-off between ASR of jailbreaks and ORR. The data points are collected by sampling and evaluating every 100 training steps.}
% The gray curve and the yellow curve traces the Pareto frontiers without X-Boundary and with X-Boundary, respectively. Predefined reference points in OKTest and PHTest are (70, 60) and (100, 60)), respectively. The shaded regions represent the dominated space without X-Boundary.
\label{fig:trade_off}
\end{center}
\vskip -0.45in
\end{figure}
\begin{figure}[t]
\begin{center}
\centerline{\includegraphics[width=\columnwidth]{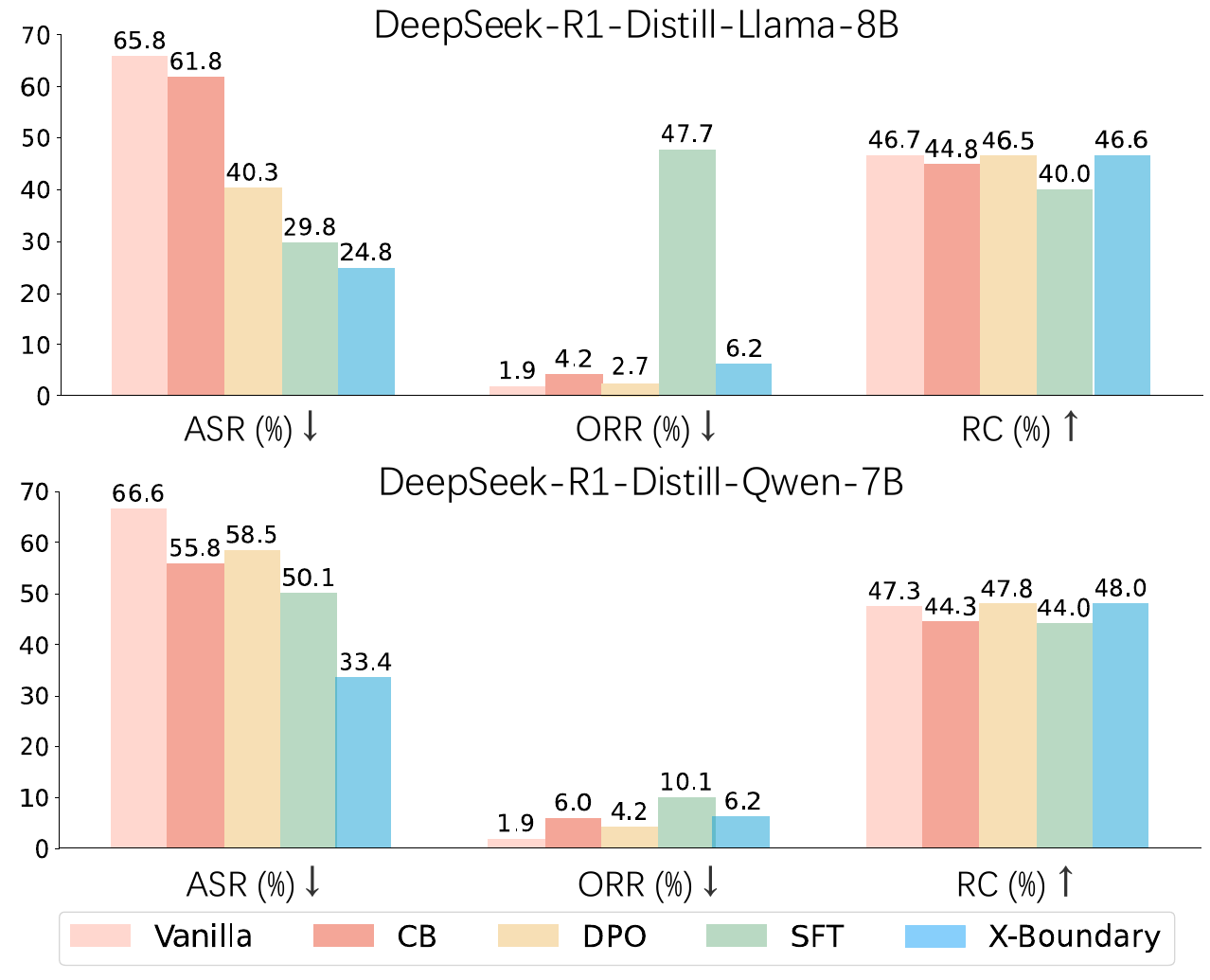}}
\setlength{\abovecaptionskip}{0.1in} 
\caption{Comparison of existing defense methods and X-Boundary on DeepSeek-R1-distilled models.} 
\label{fig:reasoning_performance}
\end{center}
\vskip -0.3in
\end{figure}
\begin{table*}[t]
\setlength{\tabcolsep}{2pt}
% \vskip 0.1in
\centering
\resizebox{2\columnwidth}{!}{
\begin{tabular}{c|cccc|ccccccccccccc}
%\Xhline{4\arrayrulewidth}
\midrule
\multirow{2.5}{*}{\textbf{Models}} & \multirow{2.5}{*}{A} & \multirow{2.5}{*}{B} & \multirow{2.5}{*}{C} & \multirow{2.5}{*}{D} & \multicolumn{3}{c}{\textbf{Multi-Turn ASR (\%) $\downarrow$}} & \multicolumn{4}{c}{\textbf{Over-Refusal Rate (\%) $\downarrow$}} & \multicolumn{3}{c}{\textbf{General Capability (\%) $\uparrow$}} \\
\cmidrule(lr){6-8} \cmidrule(lr){9-12} \cmidrule(lr){13-15}
% Model & Single Round ASR (\%) & Multi Round ASR (\%) & 
~ & & & & & ActorAttack & RedQueen & Crescendo & XSTest & OKTest & OR-Bench & PHTest & MMLU & GSM8K & HumanEval \\
\midrule
Vanilla & & & & & 76.00 & 39.50 & 62.00 & 6.00 & 19.33 & 1.67 & 5.60 & 74.26 & 80.67 & 81.71 \\
\cmidrule(lr){1-15}
(a) & $\checkmark$ & & & & 63.00 & 11.50 & 30.00 &9.20 & 19.00 & 6.66 & 14.66 & 74.19 & 80.14 & 82.32\\
(b) & $\checkmark$ & $\checkmark$ & & & 15.50 & 5.50 & 12.00 & 20.40 & 26.00 & 34.00 & 43.67 & 74.21 & 80.36 & 81.10\\
(c) & $\checkmark$ & $\checkmark$ & $\checkmark$ & & 15.50 & 7.00 & 16.00 & 18.00 & 28.33 & 6.33 & 25.00 & 74.20 & 80.36 & 81.71\\
\cmidrule(lr){1-15}
X-Boundary & $\checkmark$ & $\checkmark$ & $\checkmark$ & $\checkmark$ & 17.50 & 7.50 & 16.00 & 10.40 & 16.67 & 5.33 & 15.00 & 74.17 & 80.52 & 81.10\\
% \Xhline{4\arrayrulewidth}
\midrule
\end{tabular}}
\setlength{\abovecaptionskip}{0.1in} 
\caption{Ablation study on Qwen2.5-7B-Instruct. In this table, A represents single-turn defense data, B represents multi-turn defense data, C represents boundary-safe data, and D represents the separate loss $\mathcal{L}_{\texttt{s}}$.}
% \vspace{-6pt}
\label{tab:ablation}
\vspace{-8pt}
\end{table*}
\subsection{Performance on Large Reasoning Models}
Recently, several studies~\cite{safechain,R1_assessment} have highlighted significant safety risks in the outputs of large reasoning models (LRMs), particularly during the thinking process.
Enhancing the security of LRMs, such as DeepSeek-R1~\cite{guo2025deepseek}, has become an urgent priority.
In this section, we evaluate the performance of X-Boundary and baseline methods on two LRMs: DeepSeek-R1-distilled-LLaMA-8B and DeepSeek-R1-distilled-Qwen-7B.
The evaluation of defense performance and over-refusal adopts the same datasets and metrics as Section~\ref{sec:comparison} described.
To assess general capability, we replace the previous datasets with more challenging benchmarks that test reasoning capability (RC), namely AIME2024, GPQA, and LiveCodeBench.
The detailed evaluation settings and analysis of RC are listed in Appendix~\ref{app:reasoning_ability}.

As shown in Fig.~\ref{fig:reasoning_performance}, both CB and DPO exhibit marginal defense effectiveness on LRMs, reducing the average ASR by only around 10\% on the Distilled-Qwen model.
Although SFT still demonstrates robust defense on LRMs, it causes a degradation of over 5\% in RC and leads to a significant increase in the average ORR.
In contrast, X-Boundary achieves outstanding defense performance while maintaining the average ORR below 10\% and preserving 99\% RC.
% Some methods, such as CB, struggle to converge within the complex representation space of LRMs, whereas X-Boundary consistently maintains strong defense robustness.
%
This result may be attributed to the theoretical analysis in Section~\ref{sec:theoretical}, which suggests that X-Boundary reduces the difficulty of training and facilitates faster convergence within the complex representation space of LRMs.

% Theoretical analysis indicates that these performance gain of X-Boundary related to the improvement effect in the learning speed.}
% X-Boundary effectively reduces the ASR of both single-turn and multi-turn attacks, while not causing severe over-refusal issues or a decline in reasoning ability in distilled reasoning models.
% With a detecter, X-Boundary avoids and interrupts harmful output
% Surprisingly, the CB appears to be less effective for reasoning models, possibly because of differences in their representation space compared to standard models.
% According to the theoretical insights in Section \ref{sec:theoretical}, X-Boundary reduces the difficulty of training and accelerates the training process.
% Therefore, X-Boundary has an advantage over CB.

\subsection{Ablation Study}
\label{sec:ablation}
We conduct ablation studies on the impact of multi-turn defense data, boundary-safe data, and separate loss. The results are illustrated in Table~\ref{tab:ablation}. Ablation Studies on Llama-3-8B-Instruct and Mistral-7B-Instruct-v0.2 are shown in Appendix~\ref{app:ablation}. Please see Appendix~\ref{app:loss_ablation} and~\ref{app:sensitive_analysis} for ablation studies on three terms of loss and sensitivity analysis on hyper-parameters $\alpha$ and $\beta$, respectively.

\vspace{+1mm}
\noindent\textbf{Multi-turn defense data contribute to the reduction of ASR but intensify the over-refusal problem.} 
With the multi-turn defense data described in Section \ref{sec:comparison} added into the erase set, the ASR of ActorAttack is reduced from 63.00\% to 15.50\% on Qwen2.5-7B-Instruct. 
However, the ORRs in OR-Bench and PHTest increase by about 30.00\%.

\vspace{+1mm}
\noindent\textbf{Boundary-safe data can partially mitigate the over-refusal issue.}
Boundary-safe QA pairs added to the retain set significantly reduce the ORR on OR-Bench and PHTest but show limited effectiveness on XSTest and OKTest. 
This may be because the boundary-safe QA pairs are synthesized by LLMs, leading to effectiveness on OR-Bench and PHTest, which also use synthetic data for testing. 
In contrast, the test queries in XSTest and OKTest are manually crafted and may differ in distribution from the synthetic data, making it difficult to achieve effective generalization.

\vspace{+1mm}
\noindent\textbf{Simply adjusting the size of boundary-safe data can not effectively balance ASR and ORR.}
% The ORR does not continuously decrease as the data size increases.}
%
Increasing the size of boundary-safe data can reduce the ORR, but it also leads to a sharp increase in ASR against jailbreaks.
Please see Appendix \ref{app:set_size_effect} for more detailed results.

\vspace{+1mm}
\noindent\textbf{Separate loss can further reduce the ORR.}
Unlike simply adding boundary-safe data, separate loss markedly reduces the ORR on both manually crafted and synthetically constructed benchmarks.
Since the boundary-safe data shares the same source as OR-Bench, simply adding data is sufficient to reduce the ORR to a very low level, leaving little room for separate loss to make a noticeable impact. 
However, in the other three benchmarks, separate loss further reduces the ORR by an average of 9.75\%.
\section{Conclusion}
In this paper, we comprehensively compare existing jailbreak defense methods and reveal the trade-off between the robustness of defense and LLM usability. 
We analyze this issue from the perspective of LLMs' feature space, and conclude that previous methods fail to learn a precise boundary that distinguishes safe and harmful representations without an explicit formulation.
To address this issue, we propose X-Boundary to push harmful representations away from safe representations through explicit loss functions and obtain a clear distinction boundary. 
Such distinction boundary enables the consequential removal of harmful representations without disrupting safe ones, thereby achieving a balance between robustness against jailbreaks and LLM usability.
% to explicit formulate the distinction boundary to achieve the balance between robustness and usability.
% By doing so, a precise distinction is established, allowing harmful representations to be erased without disrupting safe ones, thus achieving the balance between robustness against multi-turn jailbreaks and the usability of LLMs.
%
%This approach establishes a clear distinction boundary, enabling the removal of harmful representations without disrupting safe ones, thereby achieving a balance between robustness against multi-turn jailbreaks and LLM usability.
%
We think that X-Boundary can offer a more efficient and fine-grained defense for LLMs, improving the deployment of robust AI systems in real-world applications.
% complementing existing safety alignment methods and ultimately 
% \textbf{Limitation and future work.}

\section*{Limitations}
This paper has several limitations. 
First, although we analyze the underlying causes of the trade-off between defense robustness and LLM usability and propose a post-training method to achieve a mutually beneficial outcome, we have not yet thoroughly investigated how to fundamentally resolve this issue during the pre-training stage, as the pre-training processes of these LLMs are closed-source.
Second, due to its reliance on representation-level intervention, X-Boundary is not applicable to black-box models, thereby restricting its use in some practical settings.

\section*{Ethical considerations}
This work aims to advance the field of large language models (LLMs) safety alignment by proposing X-Boundary, a method that maintains state-of-the-art performance in multi-turn jailbreak attack defenses while effectively mitigating the over-safety problem.
All the training data and reproduced defense methods we used are open-source and consistent with their intended use, with proper citations to their original sources.
We do not consider that this method will directly lead to severe negative consequences for societal development. However, we must be aware that malicious actors could exploit various approaches to induce LLMs to generate misleading or harmful content. Besides, training data containing some harmful or offensive questions and answers pose a risk of malicious use and potential harm. Therefore, we expect that future research will focus on enhancing content moderation mechanisms and setting up ethical usage protocols to effectively reduce potential risks. 
% Third, while X-Boundary demonstrates promising defense performance on the DeepSeek-R1 distilled reasoning model with negligible impact on usability, resource constraints prevent us from validating its effectiveness on larger, non-distilled reasoning models.}

\section*{Acknowledgements}
This work is supported by Shanghai Artificial Intelligence Laboratory. And we would like to express our gratitude to our collaborators for their efforts.

% % Entries for the entire Anthology, followed by custom entries
\bibliography{anthology,custom}
\bibliographystyle{acl_natbib}

\appendix
\section{Related Work}

\vspace{+1mm}
\noindent\textbf{Jailbreak attacks.}
Jailbreak attacks aim to bypass the safety mechanisms of large language models (LLMs), prompting them to generate harmful or policy-violating content~\cite{defense_survey}. 
These attacks can be broadly categorized into single-turn and multi-turn scenarios based on their interaction structure with the model~\cite{backdoor_multi,scaleAI_multi}.
% Single-turn attacks craft a single input prompt that manipulates the model to ignore its alignment constraints~\cite{chao2023PAIR,zeng2024PAP,yong2023translation,wei2023wrapping}. 
%
One representative method is GCG~\cite{zou2023GCG}, which formulates jailbreak as an optimization problem and employs genetic algorithms to automatically evolve effective attack prompts. 
AutoDAN~\cite{autodan} automates the generation of adversarial prompts through a dynamic prompt-injection framework and achieves high attack success rates with minimal human intervention.
Unlike single-turn jailbreaks, multi-turn jailbreaks exploit flexible multi-turn dialogues to bypass the safeguards of LLMs \cite{bju_multi,imposterAI_multi,red_queen}, making them challenging to detect and defend against.
For example, \citet{cosafe_multi}, \citet{bju_multi} and \citet{imposterAI_multi} generate multi-turn jailbreak queries by breaking down the original malicious query into multiple less harmful sub-questions.
% \citet{red_queen} crafts 40 multi-turn scenarios in which malicious intent is concealed under the guise of preventing harm.
% \citet{actor_attack} captures multi-turn attack clues by modeling a network of semantically linked actors. 
%
\citet{actor_attack, coa} and \citet{crescendo} dynamically adjust the attack query based on the contextual feedback from victim LLMs, gradually steering benign initial queries toward more harmful topics throughout the conversation.
% In this paper, we evaluate the defense robustness of existing methods and X-Boundary against three types of multi-turn jailbreak attacks: ActorAttack~\cite{actor_attack}, RedQueen~\cite{red_queen}, Crescendo~\cite{crescendo}.}
% For a comprehensive evaluation of the robustness of defense methods, we implement three multi-turn jailbreak method to attack ,  

\vspace{+1mm}
\noindent\textbf{Defenses for LLMs.} 
Although defense methods for multi-turn jailbreak attacks are less explored in the literature, some existing approaches have proven effective against various single-turn attacks and have the potential to be adapted for multi-turn scenarios. 
% These methods have the potential to be adapted for multi-turn scenarios.
% alignment with human values~\citep{bai2022training, dpo, ouyang2022training},
These defense methods can be classified into the following categories: training LLMs to refuse harmful queries~\citep{bai2022training, dpo, ouyang2022training, decoupled_sft}, training LLMs to prioritize safe instructions~\citep{lu2024sofa,wallace2024instruction,zhang2023defending}, unlearning and editing harmful knowledge ~\citep{eraser, safe_unlearning, ren2024identifying, cq}, prompt engineering~\citep{xie2023defending, zheng2024prompt}, and implementing input and output guardrails~\citep{ inan2023llama,dubey2024llama} such as jailbreak detection~\citep{hu2024gradient, jain2023baseline} input perturbation~\citep{cao2023defending,robey2023smoothllm,liu2024protecting}. % Specifically, input and output guardrails involve input perturbation~\citep{robey2023smoothllm,cao2023defending,liu2024protecting}, safety decoding~\citep{xu2024safedecoding}, and jailbreak detection~\citep{zhang2024parden,yuan2024rigorllm,phute2023llm,alon2023detecting,jain2023baseline,hu2024gradient}.
Several studies ~\citep{li2024wmdp,circuit_breaker,zou2023representation, qian2024towards, zhang2024better} also propose defense methods from the perspective of representation engineering, inspiring us to optimize LLMs in the representation space to strike a balance between defense robustness and LLM usability.

\vspace{+1mm}
\noindent\textbf{Decline in usability caused by defense methods.} % \textbf{Safety-Usability Trade-off}
We assess the impact of defense methods on usability from two aspects: general capability degradation and over-refusal.
General capability degradation, commonly known as the ``alignment tax''~\cite{alignment_tax} phenomenon, has garnered widespread attention and has been extensively discussed in technical reports on LLMs~\cite{dubey2024llama,inan2023llama,actor_attack,li2024wmdp,vlsbench}.
%
% \textcolor{brown}{Recent studies discuss the false refusal issue, also referred to as exaggerated safety, over-defensiveness, and overkill.
Over-refusal refers to the unreasonable rejection of safe queries by LLMs~\cite{varshney2023art,zhao2024towards,zou2023representation,arditi2024refusal,cao2024nothing}.
% also referred to as over-defensiveness, exaggerated safety, and overkill.
\citet{bianchi2023safety} discover that excessive safety-tuning makes LLMs refuse entirely safe prompts if they superficially resemble unsafe ones.
% \citet{varshney2023art} find that self-checking-based jailbreak defenses, which prompt the LLM to check its own input and output, significantly increases false refusal rates on some harmless prompt datasets.
%
\citet{xstest}, \citet{oktest}, \citet{orbench}, and \citet{phtest} employ linguistic techniques or automatic pipelines to generate seemingly unsafe prompts for evaluating LLMs’ over-refusal behavior.
Previous studies have explored several approaches to mitigate over-refusal.
For example, \citet{oktest} applied contrastive decoding by inferencing twice on the same query with and without the system prompt.
\citet{wang2024surgical} extract and ablate a false refusal vector to reduce over-refusal rate.
In this paper, we evaluate the performance of these methods and compare them with X-Boundary.
% \textcolor{blue}{Previous studies have explored several approaches to strike a better trade-off.
% Some other work aims to either mitigate or exacerbate this trade-off. use post safety alignment, 
% For example, \citet{zou2023representation}, \citet{arditi2024refusal}, \citet{stickland2024steering}, and \citet{cao2024nothing} identify the refusal vectors in LLM's representation spaces and steer them to mitigate over-refusal.}
% 
% Nevertheless, whether this trade-off is intrinsic for auto-regressive LLMs is still an open question, and our results show that scaling alone cannot fix it.
% % observe over-generalization where LLMs apply alignment rules when they should not be applied.
% %
% As false refusals gain attention, some work constructs dedicated datasets and conducts a systematic evaluation.
% XStest \citep{xstest} and OKtest \citep{oktest} design specific patterns of false refusal and craft pseudo-harmful prompts manually or with the assistance of LLMs.
% % 
% Given the limited size of these public datasets, \citet{orbench, phtest} develops a pipeline to automatically generate a large-scale false refusal dataset.}

\section{The Optimization Process of X-Boundary}
\label{app:algorithm}
The optimization process of X-Boundary is shown as Algorithm~\ref{algorithm}.
\begin{algorithm}[tb]
\caption{The optimization process of X-Boundary}
\label{algorithm}
\begin{algorithmic}[1]
    \REQUIRE Original frozen model $\mathcal{M}_{\texttt{ref}}$, model $\mathcal{M}_{\theta}$ with parameters $\theta$ to be optimized, a function $\mathcal{R}$ that extracts representation from a model on a batch of inputs, a erase dataset $\mathcal{D}_e$, a retain dataset $\mathcal{D}_r$, a boundary dataset $D_b$, number of optimization steps $T$, hyperparameters $\alpha$ and~$\beta$, batch size $n$ 
   \begin{spacing}{1.2}
   \FOR{$t=1$ {\bfseries to} $T$}
   % \STATE $c_s = \alpha  (1 - \frac{t}{2T})$, $c_r = \alpha \frac{t}{2T}$ \COMMENT{Example Coefficient Schedule}
   \STATE Sample $\{x_i\}_{i=1}^n \sim \mathcal{D}_r$, $\{x_i^h\}_{i=1}^n \sim \mathcal{D}_e$
   \STATE Sample $\{(x_i^b,x_i^r)\}_{i=1}^n \sim \mathcal{D}_b$ 
   \STATE $c_r=\alpha \frac{t}{\beta}$, $c_e=c_s=\alpha(1-\frac{t}{\beta})$
   \STATE $\mathcal{L}_\text{r} = \frac{1}{n}\sum_{i=1}^{n} \left\| \mathcal{R}_{\mathcal{M}_\theta} \left(x_i \right) - \mathcal{R}_{\mathcal{M}_{\texttt{ref}}} \left(x_i \right) \right\|_2$ % \COMMENT{Retain Loss}
   \STATE {\scriptsize $\mathcal{L}_\texttt{e} =\frac{1}{n}\sum_{i=1}^{n}\texttt{ReLU} \left(\texttt{cos}\left(\mathcal{R}_{\mathcal{M}_{\theta}} \left(x_i^h \right), \mathcal{R}_{\mathcal{M}_{\texttt{ref}}} \left(x_i^h \right)\right) \right)$} % \COMMENT{Erase Loss}
   \STATE {\scriptsize $\mathcal{L}_\texttt{s} = \frac{1}{n}\sum_{i=1}^{n}\texttt{ReLU} \left(\texttt{cos}\left(\mathcal{R}_{\mathcal{M}_\theta} \left(x_i^r \right), \mathcal{R}_{\mathcal{M}_{\texttt{ref}}} \left(x_i^b \right) \right) \right)$} % \COMMENT{Separate Loss}
   \STATE $\mathcal{L}=c_r\mathcal{L}_r + c_e\mathcal{L}_e + c_s\mathcal{L}_s$ % \COMMENT{Loss to be Optimized}
   \STATE Update parameters $\theta$ to minimize $\mathcal{L}$
   \ENDFOR
   \end{spacing}
   % \STATE \textbf{return} the optimized parameters $\theta$
   % \UNTIL{$noChange$ is $true$}
\end{algorithmic}
\end{algorithm}

\section{Additional Results}
\subsection{Evaluation of Existing Over-Refusal Mitigation Methods}
\label{app:over_refusal_mitigation}
To further investigate the trade-off issue, we implement three over-refusal mitigation methods: system prompt (SP)~\cite{oktest}, Self-Contrastive Decoding (Self-CD)~\cite{oktest}, and vector ablation (VA)~\cite{wang2024surgical}. 
Table~\ref{tab:over_refusal_mitigation_vanilla} shows that these methods are effective on the vanilla model (Qwen2.5-7B-Instruct) and do not lead to a significant increase in ASR. 
However, as shown in Table~\ref{tab:over_refusal_mitigation_defense}, their impact on reducing ORR is less noticeable in models fine-tuned with defense methods, and they substantially weaken the defense effectiveness. 
Furthermore, both Self-CD and VA depend on refusal vectors or refusal tokens, which are ineffective for methods like CB that do not use a fixed refusal template.

\begin{table*}[!ht]
\setlength{\tabcolsep}{4pt}
\centering
\resizebox{\textwidth}{!}{
\begin{tabular}{c|ccccccccc}
%\Xhline{4\arrayrulewidth}
\midrule
\multirow{2.5}{*}{\textbf{Methods}} & \multicolumn{2}{c}{\textbf{Attack Success Rate (\%) $\downarrow$}} & \multicolumn{4}{c}{\textbf{Over-Refusal Rate (\%) $\downarrow$}} & \multicolumn{3}{c}{\textbf{General Capability (\%) $\uparrow$}} \\
\cmidrule(lr){2-3} \cmidrule(lr){4-7} \cmidrule(lr){8-10}
% Model & Single Round ASR (\%) & Multi Round ASR (\%) & 
& DirectRequest & ActorAttack & XSTest & OKTest & OR-Bench & PHTest & MMLU & GSM8K & HumanEval \\
\midrule
Qwen2.5-7B-Instruct & 26.25 & 76.00 & 6.00 & 19.33 & 1.67 & 5.67 & 74.26 & 80.67 & 81.71 \\
\midrule
+SP & 26.67 & 78.50 & 2.80 & 9.33 & 1.67 & 3.67 & 74.30 & 80.97 & 81.10 \\
+Sefl-CD & 28.33 & 78.00 & 2.80 & 9.33 & 1.00 & 4.33 & 74.21 & 80.52 & 82.93 \\
+VA & 27.92 & 75.50 & 4.20 & 11.00 & 1.33 & 3.00 & 74.58 & 80.36 & 81.71 \\
\midrule
\end{tabular}}
\setlength{\abovecaptionskip}{0.1in} 
\caption{Performance of existing over-refusal mitigation methods on Qwen2.5-7B-Instruct.}
\label{tab:over_refusal_mitigation_vanilla}
\vspace{-6pt}
\end{table*}

\begin{table*}[!ht]
\setlength{\tabcolsep}{4pt}
\centering
\resizebox{\textwidth}{!}{
\begin{tabular}{c|ccccccccc}
%\Xhline{4\arrayrulewidth}
\midrule
\multirow{2.5}{*}{\textbf{Methods}} & \multicolumn{2}{c}{\textbf{Attack Success Rate (\%) $\downarrow$}} & \multicolumn{4}{c}{\textbf{Over-Refusal Rate (\%) $\downarrow$}} & \multicolumn{3}{c}{\textbf{General Capability (\%) $\uparrow$}} \\
\cmidrule(lr){2-3} \cmidrule(lr){4-7} \cmidrule(lr){8-10}
% Model & Single Round ASR (\%) & Multi Round ASR (\%) & 
& DirectRequest & ActorAttack & XSTest & OKTest & OR-Bench & PHTest & MMLU & GSM8K & HumanEval \\
\midrule
Qwen2.5-7B-Instruct & 26.25 & 76.00 & 6.00 & 19.33 & 1.67 & 5.67 & 74.26 & 80.67 & 81.71 \\
\midrule
+SFT & 5.42 & 21.00 & 46.00 & 57.67 & 29.33 & 53.67 & 74.30 & 76.42 & 77.44 \\
+SFT+SP & 6.25 & 41.00 & 37.20 & 47.00 & 26.00 & 44.00 & 74.17 & 75.51 & 78.66 \\
+SFT+Sefl-CD & 6.00 & 28.50 & 44.80 & 52.67 & 28.33 & 54.00 & 73.63 & 77.94 & 79.27 \\
+SFT+VA & 8.75 & 43.50 & 23.60 & 41.33 & 23.67 & 40.00 & \textbf{74.58} & 77.94 & 78.66 \\
% \Xhline{4\arrayrulewidth}
\midrule
+CB & 1.67 & \textbf{15.50} & 20.60 & 26.00 & 34.00 & 43.67 & 74.21 & 80.36 & 81.10 \\
+CB+SP & 2.92 & 27.00 & 20.20 & 27.33 & 35.67 & 42.00 & 74.21 & 80.43 & 80.49 \\
+CB+Sefl-CD & 4.58 & 26.50 & 24.80 & 25.00 & 37.33 & 46.33 & 74.30 & 80.52 & 79.88 \\
+CB+VA & 2.08 & 20.50 & 19.20 & 24.00 & 33.67 & 41.33 & 73.67 & 80.43 & 80.49 \\
\midrule
X-Boundary & \textbf{1.25} & 17.50 & \textbf{10.40} & \textbf{16.67} & \textbf{5.33} & \textbf{15.00} & 74.17 & \textbf{80.52} & \textbf{81.10} \\
\midrule
\end{tabular}}
\setlength{\abovecaptionskip}{0.1in} 
\caption{Performance of existing over-refusal mitigation methods on Qwen2.5-7B-Instruct fine-tuned with defense methods.}
\label{tab:over_refusal_mitigation_defense}
\vspace{-6pt}
\end{table*}

\subsection{Performance on Qwen2.5-14B-Instruct}
\label{app:large_model_results}
Table~\ref{tab:large_model_results} shows that X-Boundary also achieves SOTA defense and the lowest ORR on Qwen2.5-14B-Instruct.

\begin{table*}[!ht]
\setlength{\tabcolsep}{4pt}
\centering
\resizebox{\textwidth}{!}{
\begin{tabular}{c|cccccccccccc}
%\Xhline{4\arrayrulewidth}
\midrule
\multirow{2.5}{*}{\textbf{Methods}} & \multicolumn{4}{c}{\textbf{Attack Success Rate (\%) $\downarrow$}} & \multicolumn{4}{c}{\textbf{Over-Refusal Rate (\%) $\downarrow$}} & \multicolumn{3}{c}{\textbf{General Capability (\%) $\uparrow$}} \\
\cmidrule(lr){2-5} \cmidrule(lr){6-9} \cmidrule(lr){10-12}
% Model & Single Round ASR (\%) & Multi Round ASR (\%) & 
& DirectRequest & ActorAttack & RedQueen & Crescendo & XSTest & OKTest & OR-Bench & PHTest & MMLU & GSM8K & HumanEval \\
\midrule
Vanilla & 15.83 & 71.50 & 63.50 & 36.00 & 4.00 & 10.00 & 1.33 & 4.00 & 80.06 & 82.49 & 79.88 \\
\midrule
SFT & 7.08 & 52.00 & 10.00 & 16.00 & 43.60 & 51.33 & 31.33 & 62.67 & 79.58 & 82.18 & 81.71 \\
DPO & 8.33 & 54.50 & 45.00 & 32.00 & 6.40 & 14.00 & \textbf{2.67} & 8.67 & 78.58 & \textbf{83.32} & 81.10  \\
CB & 3.33 & \textbf{23.50} & \textbf{4.50} & \textbf{8.00} & 43.60 & 51.33 & 32.00 & 64.33 & \textbf{79.64} & 82.56 & \textbf{82.93} \\
\midrule
X-Boundary & \textbf{2.91} & 25.00 & 5.00 & 12.00 & \textbf{5.20} & \textbf{13.67} & 4.00 & \textbf{8.33} & 79.52 & 82.18 & 81.10 \\
% \Xhline{4\arrayrulewidth}
\midrule
\end{tabular}}
\setlength{\abovecaptionskip}{0.1in} 
\caption{Comparison of existing defense methods and X-Boundary on Qwen2.5-14B-Instruct.}
\label{tab:large_model_results}
\vspace{-6pt}
\end{table*}

\subsection{Defense Performance Against Single-Turn Jailbreak Attacks}
\label{app:single_turn_asr}
We evaluate the robustness of X-Boundary and baseline methods against seven single-turn jailbreak attacks, \ie GCG~\cite{zou2023GCG}, PAIR~\cite{chao2023PAIR}, PAP~\cite{zeng2024PAP}, AutoDAN~\cite{autodan}, Obfuscation~\cite{zhang2024obfuscation}, Spliting~\cite{kang2023spliting}, and Multilingual~\cite{yong2023translation}.
Table \ref{tab:single_turn_asr} shows X-Boundary can effectively reduce the ASR of these attacks.

\begin{table*}[h]
% \captionsetup{position=above} % 让 caption 显示在表格上方
% \setlength{\tabcolsep}{2pt}
\centering
\resizebox{0.9\textwidth}{!}{
\begin{tabular}{c|cccccccc}
%\Xhline{4\arrayrulewidth}
% \midrule
% \multirow{2.5}{*}{\textbf{Methods}} & \multicolumn{3}{c}{\textbf{Multi-Turn ASR (\%) $\downarrow$}} & \multicolumn{4}{c}{\textbf{Over-Refusal Rate (\%) $\downarrow$}} & \multicolumn{3}{c}{\textbf{General Capability (\%) $\uparrow$}} \\
\midrule
% Model & Single Round ASR (\%) & Multi Round ASR (\%) & 
Methods & DirectRequest & GCG & PAIR & PAP & AutoDAN & Obfuscation & Splitting & Multilingual \\
\midrule
Vanilla & 11.67 & 31.00 & 18.00 & 15.00 & 4.50 & 12.00 & 15.00 & 3.00 \\
% \cmidrule(lr){2-12}
\midrule
SFT & 1.25 & 6.50 & 13.50 & 1.50 & 0.50 & 2.00 & 7.00 & \textbf{0.00} \\
DPO & \textbf{0.83} & 8.50 & 11.00 & 3.00 & \textbf{0.00} & 4.00 & \textbf{1.00} & \textbf{0.00} \\
GA & 5.00 & 18.00 & 11.50 & 3.50 & 1.50 & 9.50 & 7.00 & 1.00 \\
CB & 1.67 & 2.00 & 12.00 & \textbf{1.00} & \textbf{0.00} & \textbf{0.00} & 2.00 & \textbf{0.00} \\
% \cmidrule(lr){2-12}
\midrule
X-Boundary & 1.25 & \textbf{1.50} & \textbf{10.00} & \textbf{1.00} & \textbf{0.00} & 0.50 & 3.00 & \textbf{0.00} \\
% \Xhline{4\arrayrulewidth}
\midrule
\end{tabular}}
\setlength{\abovecaptionskip}{0.1in} % 让 caption 和表格更紧凑
\caption{The ASR of seven single-turn jailbreak attacks after using existing defense methods and X-Boundary.}
% \vskip 0.1in
\label{tab:single_turn_asr}
\vspace{-10pt}
\end{table*}

\subsection{The Effect of Defense Methods on the LLMs' Reasoning Ability}
\label{app:reasoning_ability}
Large reasoning models often rely on generating lengthy reasoning paths for inference. Therefore, we conducted a statistical analysis of the output length of large reasoning models employing various defense mechanisms. As shown in Table~\ref{tab:reasoning_model_len}, while X-Boundary does not lead to a degradation in general capability, it results in shorter output lengths, which may indirectly impact reasoning performance. Exploring strategies to prevent the reduction in output length represents a promising direction for future research.

\begin{table*}[t]
\setlength{\tabcolsep}{4pt}
\centering
\resizebox{0.8\textwidth}{!}{ % 使用 textwidth 适配整个表格
\begin{tabular}{cc|cccccc} % 列数要和表头匹配
\midrule
\multirow{2}{*}{\textbf{Models}} & \multirow{2}{*}{\textbf{Methods}} 
& \multicolumn{2}{c|}{\textbf{AIME2024}} 
& \multicolumn{2}{c|}{\textbf{GPQA}} 
& \multicolumn{2}{c}{\textbf{LiveCode}} \\
\cmidrule(lr){3-4} \cmidrule(lr){5-6} \cmidrule(lr){7-8}
& & pass@1 & Length (Avg.) & pass@1 & Length (Avg.) & pass@1 & Length (Avg.) \\
\midrule
\multirow{5}{*}{\makecell{\textbf{DeepSeek-}\\\textbf{R1-Distill-}\\\textbf{Llama-8B}}} 
& Vanilla & 50.00 & 15672.07 & 50.00 & 8910.93 & 40.00 & 6457.43 \\
\cmidrule(lr){2-8}
& SFT & 44.95 & 13678.53 & 40.00 & 8699.93 & 35.10 & 6804.28 \\
& DPO & 46.97 & 15716.27 & 50.00 & 8489.33 & 42.40 & 6301.96 \\
& CB & 46.97 & 15488.23 & 46.97 & 9088.78 & 40.65 & 6479.9 \\
\cmidrule(lr){2-8}
& X-Boundary & 50.00 & 13310.90 & 50.00 & 8233.20 & 39.86 & 6498.04 \\
\midrule
\multirow{5}{*}{\makecell{\textbf{DeepSeek-}\\\textbf{R1-Distill-}\\\textbf{Qwen-7B}}} 
& Vanilla & 53.33 &11046.63 & 48.99 & 8592.54 & 39.76 & 6683.22 \\
\cmidrule(lr){2-8}
& SFT & 46.67 & 13844.87 & 48.99 & 8176.29 & 36.44 & 6825.17 \\
& DPO & 53.33 & 12063.57 & 50.00 & 8344.05 & 40.08 & 6694.74 \\
& CB & 46.97& 12609.93 & 46.97 & 8356.40 & 39.33 & 6536.76\\
\cmidrule(lr){2-8}
& X-Boundary & 53.33 & 12959.73 & 50.51 & 8237.67 & 40.02 & 6583.29 \\
\midrule
\end{tabular}}
\setlength{\abovecaptionskip}{0.1in} % 让 caption 和表格更紧凑
\caption{Comparison of pass@1 accuracy and average output token length across different defense methods on reasoning model}
\label{tab:reasoning_model_len}
\vspace{-2pt}
\end{table*}

\subsection{The Trade-Off between Robustness and General Capability}
\label{app:asr_utility}
Fig. \ref{fig:asr_utility} intuitively shows the trade-off between the ASR against multi-turn jailbreaks and the decline of general capability.
As the training process advances, the ASR steadily decreases, while the decline in code and math capability progressively increases.
% It is inappropriate to focus solely on excelling in a single metric.
%
X-Boundary lies in the lower-left corner of the plots, demonstrating that it achieves a win-win outcome with robust defense and strong general capability.
% achieves both low ASR and low over-refusal rate in the meantime.

\renewcommand{\thesubfigure}{}
\begin{figure*}[!ht]
	\begin{center}
		\subfigure[]{
			\centering
            \includegraphics[width=0.8\columnwidth]{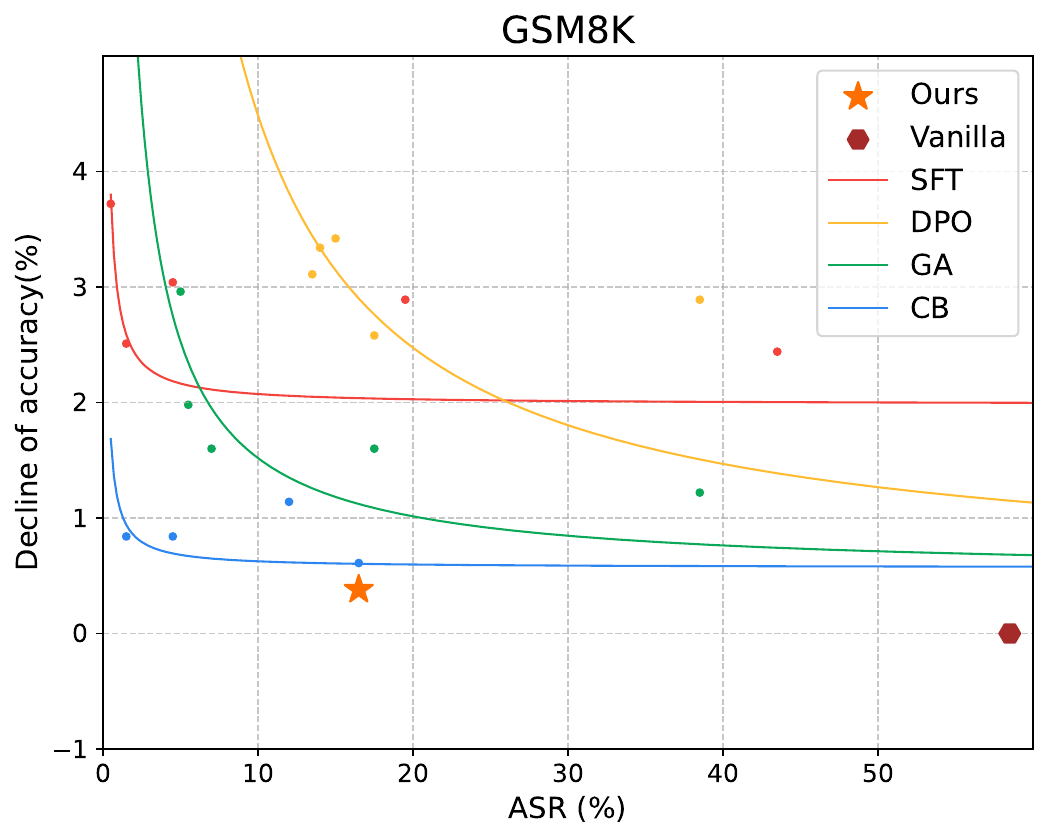}      
		}
		\subfigure[]{
			\centering
			\includegraphics[width=0.8\columnwidth]{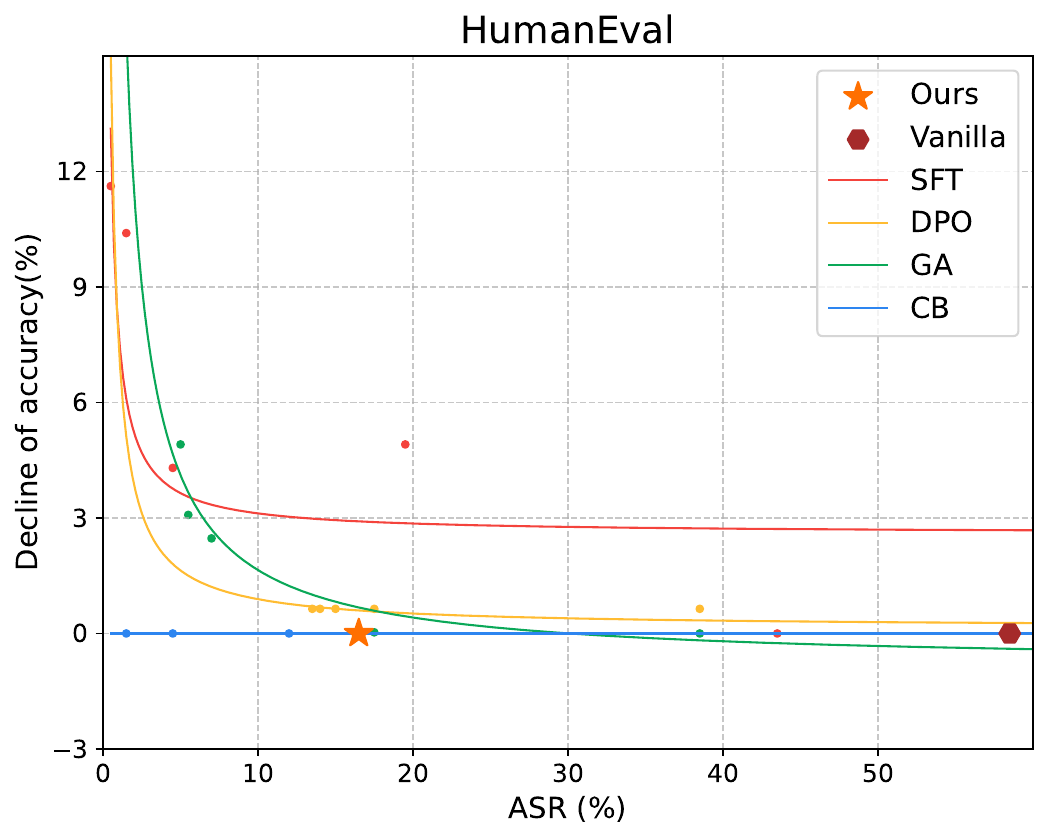}     
		}\vspace{-10mm}
	\end{center}
	\caption{The trade-off between ASR of multi-turn jailbreak and general capability on Llama-3-8B-Instruct. The data points were collected by sampling and evaluating at every 100 training steps.}
    \label{fig:asr_utility}
\end{figure*}

\subsection{Ablation Studies on Three Loss Terms}
\label{app:loss_ablation}
We conduct ablation studies on three loss terms on Llama-3-8B-Instruct. 
Table~\ref{tab:loss_ablation} indicates that three losses all contribute significantly to performance. 
Specifically, the erase loss $\mathcal{L}_\texttt{e}$ primarily reduces the ASR, while the retain loss $\mathcal{L}_\texttt{r}$ maintains general capabilities without significant degradation and prevents a substantial increase in the ORR. 
Additionally, the separate loss $\mathcal{L}_\texttt{s}$ further preserves general capabilities, reduces the ORR, and ensures the overall usability of the model. 

\begin{table*}[h]
\centering
\resizebox{\textwidth}{!}{
\begin{tabular}{l|ccccccccc}
\hline
\textbf{} & \multicolumn{2}{c}{\textbf{Jailbreak ASR (\%) $\downarrow$}} & \multicolumn{4}{c}{\textbf{Over-Refusal Rate (\%) $\downarrow$}} & \multicolumn{3}{c}{\textbf{General Capability (\%) $\uparrow$}} \\
\cmidrule(lr){2-3} \cmidrule(lr){4-7} \cmidrule(lr){8-10}
 & DirectRequest & ActorAttack & XSTest & OKTest & OR-Bench & PHTest & MMLU & GSM8K & HumanEval \\
\hline
Vanilla & 11.67 & 58.50 & 6.80 & 9.00 & 8.00 & 13.67 & 68.30 & 79.08 & 59.18 \\
w/o $\mathcal{L}_\texttt{e}$ & 12.50 & 57.00 & \textbf{5.60} & \textbf{8.33} & \textbf{6.67} & \textbf{14.00} & \textbf{68.30} & \textbf{80.21} & \textbf{59.76} \\
w/o $\mathcal{L}_\texttt{r}$ & \textbf{0.00} & \textbf{0.00} & 100.00 & 100.00 & 100.00 & 100.00 & \textbf{68.30} & 77.86 & 57.32 \\
w/o $\mathcal{L}_\texttt{s}$ & 1.67 & 16.50 & 23.60 & 27.67 & 36.00 & 52.00 & 67.67 & 78.47 & \textbf{59.76} \\
X-Boundary & 1.25 & 16.50 & 8.40 & 14.00 & 8.00 & 28.67 & 67.94 & 78.70 & \textbf{59.76} \\
\hline
\end{tabular}}
\caption{Evaluation results comparing different model settings.}
\label{tab:loss_ablation}
\end{table*}

\subsection{Sensitivity Analysis on Hyper-Parameters}
\label{app:sensitive_analysis}
We analyze the sensitivity analysis on hyper-parameters $\alpha$ and $\beta$, where $\mathcal{L}=c_r\mathcal{L}_r + c_e\mathcal{L}_e + c_s\mathcal{L}_s$, $c_e=c_s=\alpha(1-\frac{t}{\beta}$ and $c_r=\alpha \frac{t}{\beta}$. Specifically, we vary $\alpha \in \{5,10,15,20\}$ and $\beta \in \{200,250,300,350\}$. Fig.~\ref{fig:hyper_param} shows that X-Boundary is relatively insensitive to $\alpha$. As the hyper-parameter $\beta$ increases, \ie meaning the coefficients of the erase loss $\mathcal{L}_e$ and separate loss $\mathcal{L}_s$ are scaled up while the coefficient of the retain loss $\mathcal{L}_r$ are scaled down, the ASR tends to decrease, while the ORR tends to rise.

\begin{figure*}[!ht]
	\begin{center}
		\subfigure[]{
			\centering
            \includegraphics[width=0.8\columnwidth]{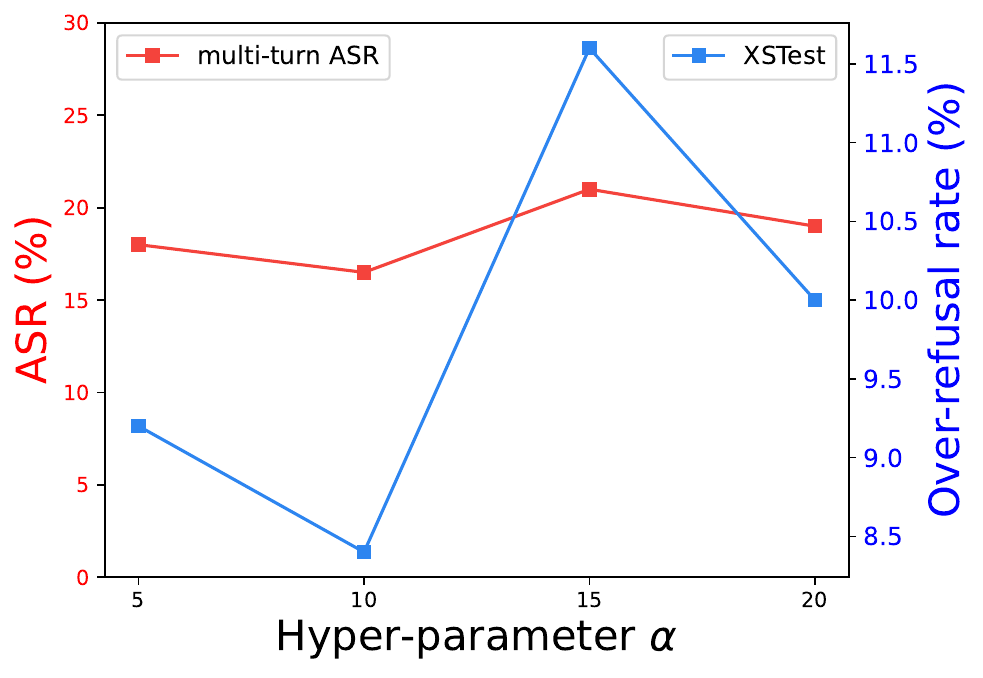}      
		}
		\subfigure[]{
			\centering
			\includegraphics[width=0.8\columnwidth]{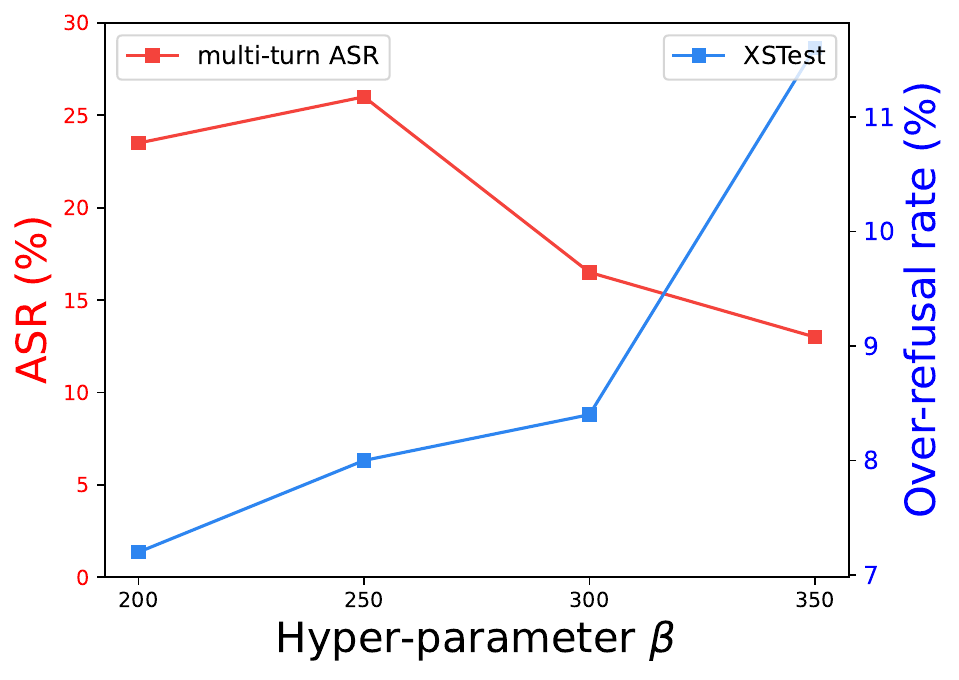}     
		}
        \vspace{-5mm}
	\end{center}
	\caption{Sensitive analysis on hyper-parameters $\alpha$ and $\beta$.}
    \label{fig:hyper_param}
\end{figure*}

\subsection{Ablation Studies on Three Models}
\label{app:ablation}
Through analyzing the results of ablation experiments in Table~\ref{tab:ablation_llama}, Table~\ref{tab:ablation_mistral} and Table~\ref{tab:ablation_14b}, we can obtain conclusions consistent with that in Section \ref{sec:ablation}.

\subsection{Effects of the Size of Boundary-Safe Data}
\label{app:set_size_effect}
Fig.~\ref{fig:data_size} shows that as the boundary-safe data size increases, the over-refusal rate generally decreases, while ASR against multi-turn attacks tends to increase. Without the separate loss, when the boundary-safe data size reaches 500, the ASR hardly decreases, failing to achieve the purpose of enhancing multi-turn defense. This demonstrates that it is difficult to balance ASR and over-refusal rate simply by adjusting the boundary-safe data size.

\subsection{Effects of Separate Loss and Boundary-Safe Data}
\label{app:repe_angle}
Fig.~\ref{fig:repe_angle} shows that adding boundary-safe data to the retain set reduces the angle between boundary-safe representations after training and their original representations.
Furthermore, under the effect of separate loss, this angle is further minimized.
Meanwhile, the angle between boundary-safe representations and refusal representations increases, indicating that separate loss contribute  to establish a clear distinction boundary.

\begin{table*}[!ht]
\setlength{\tabcolsep}{2pt}
\centering
\resizebox{\textwidth}{!}{
\begin{tabular}{c|cccc|ccccccccccccc}
%\Xhline{4\arrayrulewidth}
\midrule
\multirow{2.5}{*}{\textbf{Models}} & \multirow{2.5}{*}{A} & \multirow{2.5}{*}{B} & \multirow{2.5}{*}{C} & \multirow{2.5}{*}{D} & \multicolumn{3}{c}{\textbf{Multi-turn ASR (\%) $\downarrow$}} & \multicolumn{4}{c}{\textbf{Over-refusal Rate (\%) $\downarrow$}} & \multicolumn{3}{c}{\textbf{General Capability (\%) $\uparrow$}} \\
\cmidrule(lr){6-8} \cmidrule(lr){9-12} \cmidrule(lr){13-15}
% Model & Single Round ASR (\%) & Multi Round ASR (\%) & 
~ & & & & & ActorAttack & RedQueen & Crescendo & XSTest & OKTest & OR-Bench & PHTest & MMLU & GSM8K & HumanEval \\
\midrule
Vanilla & & & & & 58.50 & 25.00 & 34.00 & 6.80 & 9.00 & 8.00 & 13.67 & 68.30 & 79.08 & 59.18 \\
(a) & $\checkmark$ & & & & 36.50 & 5.00 & 18.00 & 12.00 & 16.00 & 14.33 & 26.00 & 68.13 & 78.54 & 59.76\\
(b) & $\checkmark$ & $\checkmark$ & & & 16.50 & 0.50 & 10.00 & 23.60 & 27.67 & 36.00 & 52.00 & 67.66 & 78.47 & 59.76\\
(c) & $\checkmark$ & $\checkmark$ & $\checkmark$ & & 15.00 & 0.50 & 10.00 & 14.00 & 18.00 & 11.67 & 35.33 & 68.05 & 78.47 & 59.76\\
\cmidrule(lr){1-15}
X-Boundary & $\checkmark$ & $\checkmark$ & $\checkmark$ & $\checkmark$ & 16.50 & 1.00 & 10.00 & 8.40 & 14.00 & 8.00 & 28.66 & 67.94 & 78.47 & 59.76\\
% \Xhline{4\arrayrulewidth}
\midrule
\end{tabular}}
\setlength{\abovecaptionskip}{0.1in}
\caption{Ablation study on Llama-3-8B-Instruct. In this table, A represents single-turn defense data, B represents multi-turn defense data, C represents boundary-safe data, and D represents the separate loss $\mathcal{L}_{\texttt{s}}$.}
\label{tab:ablation_llama}
\end{table*}
\begin{table*}[!ht]
\setlength{\tabcolsep}{2pt}
\centering
\resizebox{\textwidth}{!}{
\begin{tabular}{c|cccc|ccccccccccccc}
%\Xhline{4\arrayrulewidth}
\midrule
\multirow{2.5}{*}{\textbf{Models}} & \multirow{2.5}{*}{A} & \multirow{2.5}{*}{B} & \multirow{2.5}{*}{C} & \multirow{2.5}{*}{D} & \multicolumn{3}{c}{\textbf{Multi-turn ASR (\%) $\downarrow$}} & \multicolumn{4}{c}{\textbf{Over-refusal Rate (\%) $\downarrow$}} & \multicolumn{3}{c}{\textbf{General Capability (\%) $\uparrow$}} \\
\cmidrule(lr){6-8} \cmidrule(lr){9-12} \cmidrule(lr){13-15}
% Model & Single Round ASR (\%) & Multi Round ASR (\%) & 
~ & & & & & ActorAttack & RedQueen & Crescendo & XSTest & OKTest & OR-Bench & PHTest & MMLU & GSM8K & HumanEval \\
\midrule
Vanilla & & & & & 70.00 & 49.50 & 40.00 & 10.00 & 21.00 & 4.33 & 13.00 & 59.98 & 45.34 & 34.76 \\
(a) & $\checkmark$ & & & & 46.00 & 28.00 & 20.00 & 28.80 & 28.00 & 18.00 & 23.00 & 59.92 & 44.66 & 34.76\\
(b) & $\checkmark$ & $\checkmark$ & &  & 15.00 & 11.50 & 12.00 & 45.20 & 32.33 & 55.00 & 50.00 & 59.91 & 46.63 & 33.54\\
(c) & $\checkmark$ & $\checkmark$ & $\checkmark$ & & 13.50 & 30.00 & 14.00 & 35.60 & 25.67 & 12.67 & 38.67 & 60.06 & 46.17 & 35.37\\
\cmidrule(lr){1-15}
X-Boundary & $\checkmark$ & $\checkmark$ & $\checkmark$ & $\checkmark$ & 16.00 & 13.50 & 14.00 & 19.20 & 23.33 & 10.33 & 26.33 & 59.83 & 45.34 & 36.59\\
% \Xhline{4\arrayrulewidth}
\midrule
\end{tabular}}
\setlength{\abovecaptionskip}{0.1in} % 让 caption 和表格更紧凑
\caption{Ablation study on Mistral-7B-Instruct-v0.2. In this table, A represents single-turn defense data, B represents multi-turn defense data, C represents boundary-safe data, and D represents the separate loss $\mathcal{L}_{\texttt{s}}$.}
\label{tab:ablation_mistral}
\end{table*}
\begin{table*}[t]
\setlength{\tabcolsep}{2pt}
% \vskip 0.1in
\centering
\resizebox{\textwidth}{!}{
\begin{tabular}{c|cccc|ccccccccccccc}
%\Xhline{4\arrayrulewidth}
\midrule
\multirow{2.5}{*}{\textbf{Models}} & \multirow{2.5}{*}{A} & \multirow{2.5}{*}{B} & \multirow{2.5}{*}{C} & \multirow{2.5}{*}{D} & \multicolumn{2}{c}{\textbf{Single \& Multi-Turn ASR (\%) $\downarrow$}} & \multicolumn{4}{c}{\textbf{Over-Refusal Rate (\%) $\downarrow$}} & \multicolumn{3}{c}{\textbf{General Capability (\%) $\uparrow$}} \\
\cmidrule(lr){6-7} \cmidrule(lr){8-11} \cmidrule(lr){12-14}
% Model & Single Round ASR (\%) & Multi Round ASR (\%) & 
~ & & & & & DirectRequest & ActorAttack & XSTest & OKTest & OR-Bench & PHTest & MMLU & GSM8K & HumanEval \\
\midrule
Vanilla & & & & & 15.83 & 71.50 & 4.00 & 10.00 & 1.33 & 4.00 & 80.06 & 82.49 & 79.88 \\
\cmidrule(lr){1-14}
(a) & $\checkmark$ & & & & 4.17 & 56.50 & 6.00 & 9.00 & 4.33 & 7.00 & 79.64 & 82.95 & \textbf{81.10} \\
(b) & $\checkmark$ & $\checkmark$ & & & 2.92 & 31.00 & 12.80 & 19.67 & 53.00 & 48.33 & \textbf{79.65} & 83.25 & 80.49\\
(c) & $\checkmark$ & $\checkmark$ & $\checkmark$ & & 4.17 & 31.00 & 8.40 & 16.00 & 9.33 & 16.33 & 79.48 & \textbf{83.33} & 80.49\\
\cmidrule(lr){1-14}
X-Boundary & $\checkmark$ & $\checkmark$ & $\checkmark$ & $\checkmark$ & \textbf{2.92} & \textbf{25.00} & \textbf{5.20} & \textbf{13.67} & \textbf{4.00} & \textbf{8.33} & 79.52 & 82.18 & \textbf{81.10} \\
\midrule
\end{tabular}}
\setlength{\abovecaptionskip}{0.1in} 
\caption{Ablation study on Qwen2.5-14B-Instruct. In this table, A represents single-turn defense data, B represents multi-turn defense data, C represents boundary-safe data, and D represents the separate loss $\mathcal{L}_{\texttt{s}}$.}
% \vspace{-6pt}
\label{tab:ablation_14b}
\vspace{-8pt}
\end{table*}
\begin{figure*}[!ht]
	\begin{center}
		\subfigure[]{
			\centering
            \includegraphics[width=0.8\columnwidth]{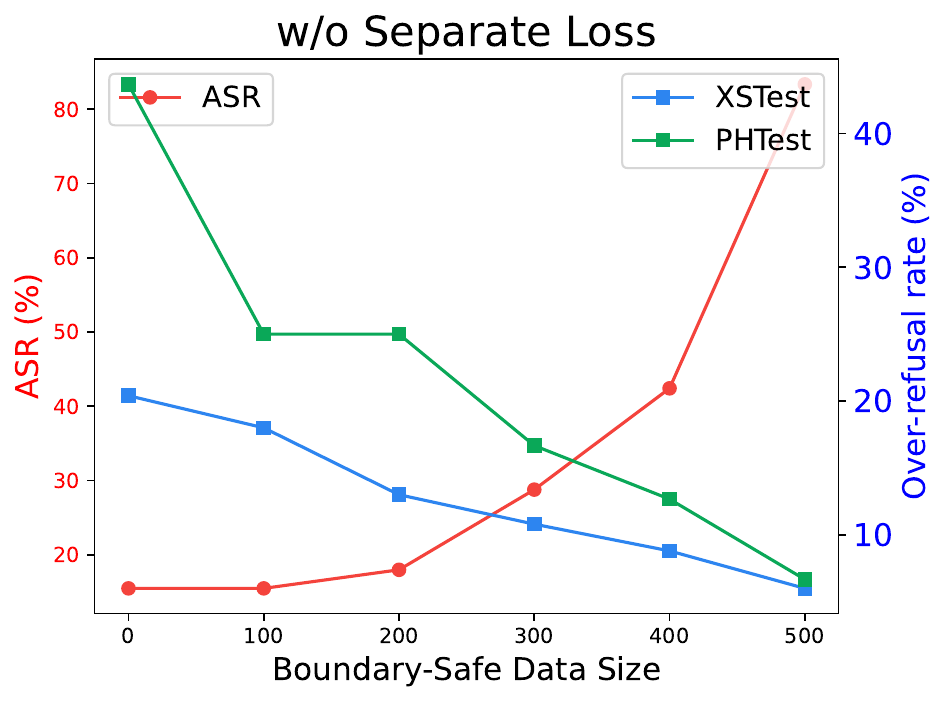}      
		}
		\subfigure[]{
			\centering
			\includegraphics[width=0.8\columnwidth]{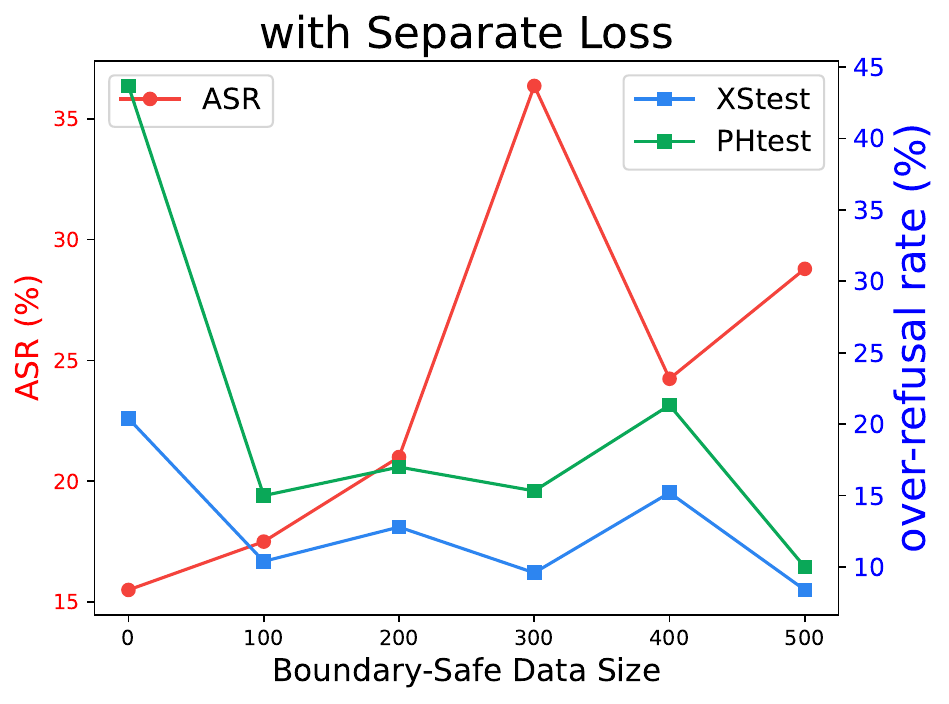}     
		}
        \vspace{-5mm}
	\end{center}
	\caption{The impact of boundary-safe data size on ASR and over-refusal rate without and with separate loss.}
    \label{fig:data_size}
\end{figure*}
\begin{figure}[h]
\vskip 0.2in
\begin{center}
\centerline{\includegraphics[width=\columnwidth]{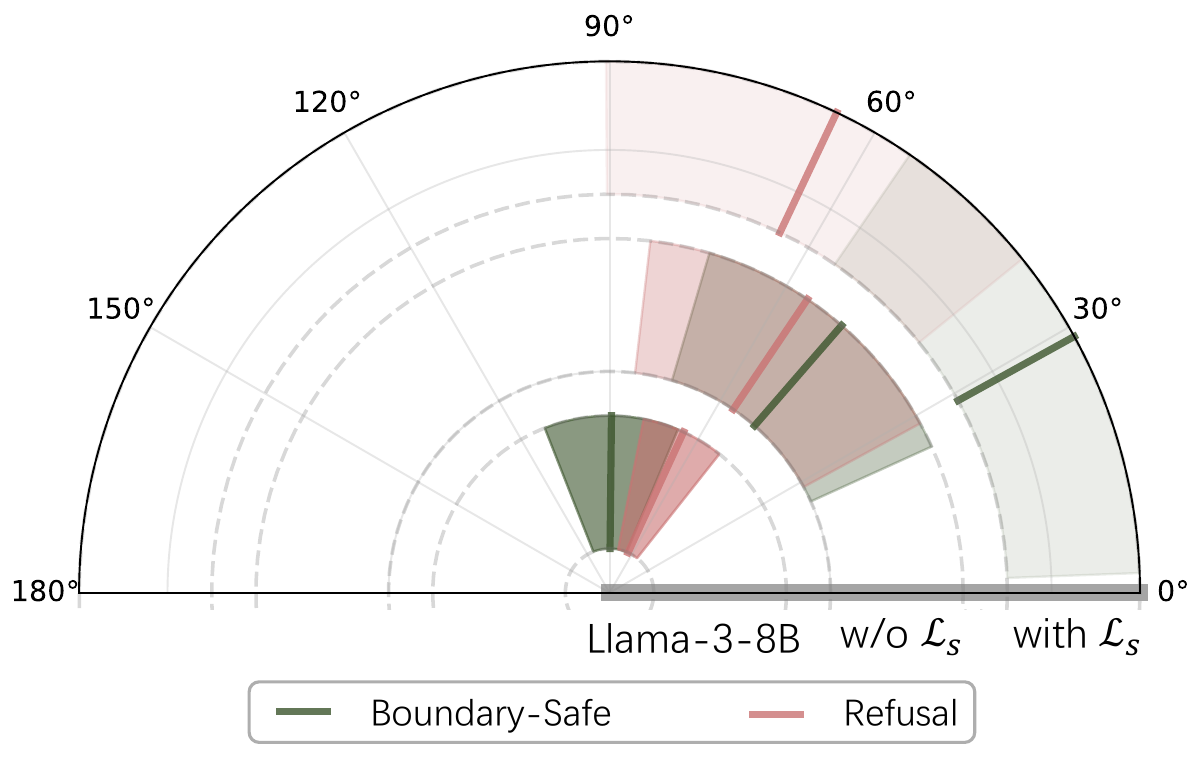}}
\caption{Visualization of effects of separate loss and boundary-safe data on the representation distribution. ``Boundary-Safe'' refers to the average representations of boundary-safe queries from OR-Bench along with their corresponding helpful responses. ``refusal'' refers to the average representations of boundary-safe queries from OR-Bench paired with refusal responses.}
\label{fig:repe_angle}
\end{center}
\vskip -0.2in
\end{figure}

\subsection{Details about Representation Visualization}
\label{app:complete_tsne}
To analyze safety-usability trade-off from the perspective of interpretability mechanism, we extract the feature representations from the 10th layer of Llama-3-8B-Instruct and visualize them using 2-dimensional t-SNE, as shown in Fig.~\ref{figs:app_tsne}.
\begin{figure*}[h]
\vskip 0.2in
\begin{center}
\centerline{\includegraphics[width=\textwidth]{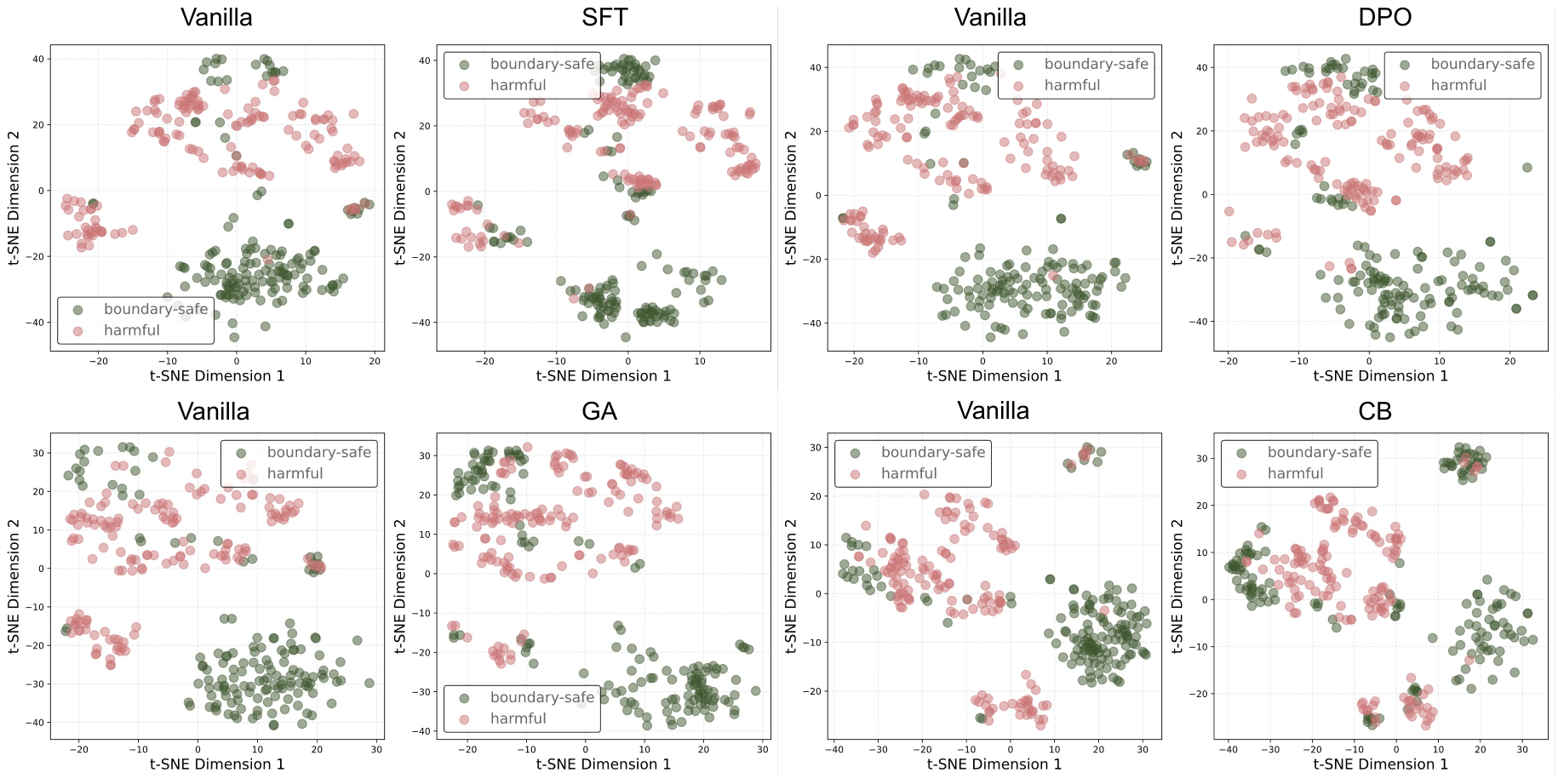}}
\caption{Visualization of the representation distribution before and after implementing SFT, DPO, GA, and CB. ``Harmful'' and ``boundary-safe'' refer to the representations of harmful and boundary-safe queries along with their corresponding responses, respectively.}
\label{figs:app_tsne}
\end{center}
\vskip -0.2in
\end{figure*}

\section{Experimental Details}
\subsection{Construction of Multi-Turn Defense Dataset}
\label{app:defense_data}
We construct a multi-turn defense dataset based on SafeMTData. 
SafeMTData is derived from the circuit breaker training dataset, and carefully filtered to prevent data contamination with Harmbench.
It includes harmful multi-turn queries generated by ActorAttack \cite{actor_attack}, along with refusal responses to reject the harmful queries.
To curate the harmful responses, we use harmful multi-turn queries in SafeMTData to attack deepseek-Instruct \cite{deepseek} and filter the harmful response using HarmBench classifier \cite{harmbench}.

For SFT, we directly exploit SafeMTData as a multi-turn training dataset following \citet{actor_attack}.
For DPO, we follow \citet{red_queen} to construct preference pair using curated harmful responses and refusal response in SafeMTDate as rejected and chosen data, respectively.
For SFT and DPO, we follow \citet{actor_attack} to maintain a 1:2 ratio between the multi-turn defense data and instruction-following data, \eg UltraChat \cite{ultrachat}.
% Considering that CB and GA enhance the safety defenses by diminishing LLMs’ ability to generate harmful outputs, their effectiveness should be agnostic with the specific form and content of the jailbreak prompts.
% Therefore, we extract harmful responses from multi-turn dialogues that have been successfully attacked and construct training sets following \citet{safe_unlearning} and \citet{circuit_breaker}.
For CB, we add pairs of harmful queries from SafeMTData along with the curated harmful responses into its defense training datasets to remove harmful knowledge that could be elicited through multi-turn attacks. 
The other data settings remain consistent with \citet{circuit_breaker}.
For GA, we add harmful queries from SafeMTData along with the curated harmful responses to the unlearning dataset and follow \cite{safe_unlearning} to use unlearning data, instruction-following data, and refusal data in a ratio of 5:5:1.

\subsection{Training Details of Baselines}
\label{app:baseline_settings}
We compare X-Boundary with the following four methods:
\begin{itemize}
  \item Multi-Turn SFT \cite{actor_attack}: fine-tuning LLMs using harmful queries as inputs and refusal answers as supervised labels directly.
  \item Multi-Turn DPO \cite{dpo, red_queen}: aligning LLMs using harmful queries as inputs, harmful answers as rejected responses, and refusal answers as chosen responses.
  \item GA \cite{safe_unlearning, eraser}: unlearning harmful knowledge by training with gradient ascent optimization methods .
  \item CB \cite{circuit_breaker}: remapping the representations of harmful knowledge to desired targeted representations.
\end{itemize}

\paragraph{Multi-Turn SFT} 
For multi-turn SFT, we set the batch size to 1 with accumulation step 16. The training process was conducted for a total of 1 epoch. Optimization was performed using the AdamW optimizer, with the learning rate set to \(5 \times 10^{-4}\), ensuring stable and efficient model updates. The warm-up ratio and weight decay ratio are set to 0.05, 0.03. All training processes use Low-Rank Adaptation (LoRA) for parameter fine-tuning, where the rank \(r\), scaling factor \(\alpha\), and dropout rate are set to 16, 16, and 0.1, respectively. It takes about 40 minutes to train a Llama-3-8B-Instruct model on a single A100 80G GPU.

\paragraph{Multi-Turn DPO} 
For Multi-turn DPO, we use a learning rate of \(1.0 \times 10^{-5}\) with a cosine learning rate scheduler and a warm-up ratio of 0.1. We set the training epoch to 3 and the batch size to 1 with gradient accumulation steps of 8. All training processes use Low-Rank Adaptation (LoRA) for parameter fine-tuning with the rank \(r\), scaling factor \(\alpha\), and dropout rate set to 8, 16, and 0, respectively. We conducted all training processes on a single A100 80GB GPU.

\paragraph{Gradient Ascent} 
Following the experimental setting of \citet{safe_unlearning}, we set the batch size to 11 with accumulation step 1, where the ratio of the three types of data in a batch is 5:5:1.
We use the AdamW optimizer with a learning rate of \( 2 \times 10^{-5} \) and set the maximum epoch as 3. For Qwen2.5-7B-Instruct and Llama-3-8B-Instruct, the coefficients of safe responses loss $ \mathcal{L}_s $, general performance loss $ \mathcal{L}_g $, and unlearning loss $ \mathcal{L}_h $ are set to 0.5, 1.0, 0.3. For Mistral-7B-Instruct-v0.2, the loss coefficients are set to 0.25, 1.0, and 0.05, respectively.
All training processes use Low-Rank Adaptation (LoRA) for parameter fine-tuning. For Llama-3-8B-Instruct and Mistral-7B-Instruct-v0.2, we set the rank \(r\), scaling factor \(\alpha\), and 
dropout rate to 16, 16, 0.05. For Qwen2.5-7B-Instruct, we conducted a grid search over the LoRA hyperparameters with \(r \in \{8, 16, 32\}\) and \(\alpha \in \{16, 32, 64\}\). We end up selecting \(r = 8\), \(\alpha = 64\), and a dropout rate of \(0.05\). We linearly decay the learning rate and select the checkpoint after 1 epoch for evaluation. Training a Mistral-7B-Instruct-v0.2 model on a single A100 80GB GPU takes approximately 1 hour.

\paragraph{Circuit Breaker}
We follow \cite{circuit_breaker} to use LoRA for fine-tuning and set the rank \(r\) as 16 on Llama-3-8B-Instruct and Mistral-7B-Instruct-v0.2, 32 on Qwen2.5-7B-Instruct and Qwen2.5-14B-Instruct. We gather the feature representations from layers 10, 20, 30, and 40 to calculate circuit-breaking loss and inset LoRA adapter into all linear layers from 0 through 40. 
The loss coefficients are dynamically adjusted. The coefficients of circuit-breaking loss and retain loss are $c_s=\alpha(1-\frac{t}{\beta})$ and $c_r=\alpha \frac{t}{\beta}$, respectively. We set $\alpha$ as 5 on Mistral-7B-Instruct-v0.2 and 10 on other LLMs, $\beta$ as 300 on Mistral-7B-Instruct-v0.2 and Llama-3-8B-Instruct, 600 on Qwen2.5-7B-Instruct, and 1200 on Qwen2.5-14B-Instruct. 
Qwen2.5-14B-Instruct is trained on for 360 steps with a batch size of 8 on 4 A100 GPUs, while other LLMs is trained on for 180 steps with a batch size of 16 on 1 A100 GPU.

\subsection{Training Details of X-Boundary}
\label{app:x_training}
We use LoRA for fine-tuning and set the rank \(r\) as 16 on Llama-3-8B-Instruct and Mistral-7B-Instruct-v0.2, 32 on Qwen2.5-7B-Instruct and Qwen2.5-14B-Instruct.
We set dynamic loss coefficients following \cite{circuit_breaker}, where $c_r=\alpha \frac{t}{\beta}$ and $c_e=c_s=\alpha(1-\frac{t}{\beta})$.
$\alpha,\beta$, and the target layers for calculating erase loss keep consistent with hyperparameters specified in Appendix~\ref{app:baseline_settings}.
We conduct a grid search on the size of boundary-safe data in a valid set in the range of [0,500], with a step of 50, selecting the size for Llama-3-8B-Instruct, Mistral-7B-Instruct-v0.2, Qwen2.5-7B-Instruct, and Qwen2.5-14B-Instruct is 500, 200, 100, and 50, respectively. The training deploys the AdamW optimizer with a fixed learning rate of 1e-4.
Qwen2.5-14B-Instruct is trained for 260 steps with a batch size of 8 on 4 A100 GPUs, while other LLMs are trained for 180 steps with a batch size of 16 on 1 A100 GPU.

\subsection{Comparison of Computational Resource Consumption}
Table~\ref{app:resource_comp} presents a comparison of computational resource consumption with existing algorithms. The training time and VRAM Usage are tested on an A100 GPU.

\begin{table*}[ht]
\centering
\resizebox{\textwidth}{!}{
\begin{tabular}{lcccc}
\toprule
\textbf{Method} & \textbf{Training Time (h) $\downarrow$} & \textbf{VRAM Usage (GB) $\downarrow$} & \textbf{Average ASR (\%) $\downarrow$} & \textbf{Average ORR (\%) $\downarrow$} \\
\midrule
SFT & 0.66 & \textbf{23.79} & 8.25 & 37.22 \\
DPO & 1.93 & 67.65 & 9.83 & 26.67 \\
GA & 1.14 & 75.33 & 14.17 & 18.62 \\
CB & 0.49 & 46.65 & 14.17 & 34.82 \\
X-Boundary & \textbf{0.45} & 46.75 & \textbf{6.67} & \textbf{14.77} \\
\bottomrule
\end{tabular}}
\caption{The comparison of computational resource consumption.} 
\label{app:resource_comp}
\end{table*}

\subsection{Evaluations}
\label{app:eval}
\paragraph{Datasets}
We evaluate our approach on benchmarks covering multi-turn attacks, over-refusal, and general model capabilities:
% \paragraph{Single-turn Attack}: We use HarmBench to evaluate model robustness against direct jailbreak attempts.
\paragraph{Multi-Turn Attack} We employ three state-of-the-art multi-turn attack benchmarks.
We adopt three state-of-the-art multi-turn attack benchmarks:
\begin{itemize}
    \item ActorAttack~\cite{actor_attack}: Emphasizes role-playing scenarios to gradually induce harmful behavior. The multi-turn queries in SafeMTData\_Attack\_600~\cite{actor_attack} are used to attack victim models, and HarmBench classifier~\cite{harmbench} is used to judge whether the attack is successful.
    \item RedQueen~\cite{red_queen}: Focuses on dynamic prompt engineering with iterative refinements. We use the template of RedQueen to generate 600 test data based on HarmBench, and use HarmBench classifier as the judge model.
    \item Crescendo~\cite{crescendo}: Includes gradually escalating attacks that push the model to produce harmful content over multiple turns. GPT-3.5-turbo is used as the attack model and GPT-4o is utilized as the judge model.
\end{itemize}

\paragraph{Over-Safety Assessment} We utilize four complementary datasets to measure over-refusal:
\begin{itemize}
    \item XSTest~\cite{xstest}: Examines model responses to boundary-case prompts involving sensitive but potentially valid information.
    \item OKTest~\cite{oktest}: Evaluates whether the model declines benign questions in real-world scenarios.
    \item OR-Bench~\cite{orbench}: Explicitly measures over-refusal rates on a suite of harmless queries.
    \item PHTest~\cite{phtest}: Comprises prompts that may look suspicious but are legitimately safe for the model to address.
\end{itemize}
\paragraph{General Capability} To ensure our method preserves the model’s general performance, we use:
\begin{itemize}
    \item MMLU~\cite{mmlu}: A broad measure of knowledge in diverse domains.
    \item GSM8K~\cite{gsm8k}: A math reasoning benchmark to test step-by-step problem solving.
    \item HumanEval~\cite{human_eval}: Assesses code generation capability, crucial for real-world AI applications.
\end{itemize}

\paragraph{Evaluation Metrics.}
To comprehensively assess our method, we adopt the following evaluation metrics:
\begin{itemize}
    \item Attack Success Rate (ASR): The proportion of attack attempts (single-turn or multi-turn) that successfully elicit harmful content from the model. Lower ASR indicates better robustness against jailbreaks.

    \item Over-Refusal Rate (ORR): The fraction of benign prompts that the model incorrectly refuses to answer. A lower over-refusal rate signifies better usability.

    \item General Capability: We measure the model’s utility on standard benchmarks (MMLU, GSM8K, HumanEval) to ensure that defensive measures do not degrade essential capabilities. A higher score indicates stronger performance on domain knowledge, reasoning, or code generation.
\end{itemize}

\section{Theoretical Analysis of X-Boundary}
\label{ap:proof}

\begin{proposition}
\label{ap:prop_cluster}
 If $\phi_\# \mu$ is $(n, \Delta)$-clusterable, then for all $m \leq n(2\Delta)^{-2}$,
 \begin{equation}
     \Var_{m}(\phi_\# \mu) < 48\Delta.
 \end{equation}
 Given a distribution $\mu$, $(n, \Delta)$-clusterable means that $\textnormal{supp}(\mu)$ lies in the union of $n$ balls of radius at most $\Delta$.
\end{proposition}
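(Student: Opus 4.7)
The plan is to bound the expected $\gW_1$-distance between two independent empirical samples from $\phi_\#\mu$ by splitting the transport cost into a within-cluster contribution controlled by the radius $\Delta$, and a cross-cluster contribution controlled by multinomial concentration. Let $B_1, \ldots, B_n$ be balls of radius $\Delta$ whose union contains $\supp(\phi_\#\mu)$, with centers $c_1, \ldots, c_n$; after a measurable partition into these balls, write $\phi_\#\mu = \sum_{i=1}^n p_i \mu_i$ with $\supp(\mu_i) \subseteq B_i$. Let $\pi$ denote the cluster-projection map sending each $x \in B_i$ to its center $c_i$, so that $\|x - \pi(x)\| \leq \Delta$ pointwise.

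First, for any empirical sample $\mu_S = \tfrac{1}{m}\sum_{j=1}^m \delta_{x_j}$, the coupling $x_j \mapsto \pi(x_j)$ shows $\gW_1(\mu_S, \pi_\#\mu_S) \leq \Delta$, and analogously for $\mu_{\tilde S}$. Two applications of the triangle inequality then yield
\begin{equation*}
    \gW_1(\mu_S, \mu_{\tilde S}) \;\leq\; 2\Delta + \gW_1(\pi_\#\mu_S, \pi_\#\mu_{\tilde S}) \;\leq\; 2\Delta + 2\,\gW_1\bigl(\pi_\#\mu_S,\, \pi_\#\phi_\#\mu\bigr),
\end{equation*}
reducing the problem to controlling $\E[\gW_1(\pi_\#\mu_S, \pi_\#\phi_\#\mu)]$, i.e. the expected empirical $W_1$-error for the finitely supported target $\pi_\#\phi_\#\mu = \sum_i p_i \delta_{c_i}$.

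Second, the cluster-count vector $(N_1, \ldots, N_n)$ of $\pi_\#\mu_S$ is $\mathrm{Multinomial}(m; p_1,\ldots, p_n)$. I would invoke the standard empirical-measure bound for finitely supported targets in the spirit of Weed--Bach: by matching each cluster's empirical mass $N_i/m$ to its population mass $p_i$ and then re-routing any excess mass through nearby occupied clusters (not through the full support diameter), the transport cost can be bounded by a constant multiple of $\Delta$ times the expected total mass mismatch $\E[\tfrac{1}{m}\sum_i |N_i - mp_i|]$. Cauchy--Schwarz applied to the binomial standard deviations gives $\E[\tfrac{1}{m}\sum_i |N_i - mp_i|] \leq \sqrt{n/m}$, and the sample-size regime governed by $n(2\Delta)^{-2}$ in the hypothesis provides the precise calibration that lets this term be absorbed into a constant-times-$\Delta$ bound; combining with the two $\Delta$ contributions from the triangle step and tracking the factor of two from comparing both empirical projections to the common target gives the explicit $48\Delta$ constant.

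The main obstacle is replacing the crude diameter-based bound on the cross-cluster transport by one that scales only with $\Delta$: the naive inequality $\gW_1(\pi_\#\mu_S, \pi_\#\phi_\#\mu) \leq \mathrm{diam}(\supp \phi_\#\mu)\cdot \|\pi_\#\mu_S - \pi_\#\phi_\#\mu\|_{TV}$ is useless since the cluster centers may be arbitrarily far apart. A clusterability-aware routing scheme that exchanges mass between cluster centers via short chains of intermediate occupied clusters is therefore essential, and tracking the constant $48$ through the two triangle inequalities, the Cauchy--Schwarz bound on the multinomial fluctuations, and the routing argument is the bookkeeping portion I expect to take the most care.
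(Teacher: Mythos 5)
Your outline (project each sample to its cluster center at cost at most $\Delta$, symmetrize with the triangle inequality, control the cluster-mass fluctuations by $\sqrt{n/m}$ via Cauchy--Schwarz) has the right shape, but the step you yourself flag as the main obstacle is a genuine gap, and it is exactly where your route departs from the paper. The hypothesis that $\phi_\#\mu$ is $(n,\Delta)$-clusterable constrains only the radii of the $n$ balls, not the distances between them, so there is no ``clusterability-aware routing scheme'' that moves excess mass between clusters at cost $O(\Delta)$ per unit mass: take two balls of radius $\Delta$ whose centers are at distance $R\gg\Delta$, each carrying mass $1/2$. The empirical imbalance between the two clusters has expected size of order $1/\sqrt{m}$ and must be transported across distance roughly $R$, contributing on the order of $R/\sqrt{m}$; there are no ``intermediate occupied clusters'' to route through, so no chaining argument can replace $R$ by a multiple of $\Delta$. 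The paper never attempts such a lemma: it invokes Proposition 13 of Weed--Bach as a black box, whose proof is carried out in a metric space of normalized bounded diameter, so the cross-cluster leg is charged to the diameter and absorbed into the constant $(9^p+3)$; the only step the paper adds is the symmetrization $\Var_m(\phi_\#\mu)\le 2\,\E_{S\sim\mu^m}\big[\gW_1(\phi_\#\mu,\phi_\#\mu_S)\big]\le 24\sqrt{n/m}$.

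Separately, even granting your routing claim, your final calibration runs in the wrong direction. The regime $m\le n(2\Delta)^{-2}$ gives $\sqrt{n/m}\ge 2\Delta$, so a bound of the form $C\Delta\cdot\sqrt{n/m}$ (or $C\sqrt{n/m}$) is \emph{not} ``absorbed into a constant times $\Delta$''; $\sqrt{n/m}$ is unbounded as $m$ decreases. The hypothesis is actually used the other way around: it lets the within-cluster cost $(2\Delta)^p$ be dominated by $\sqrt{n/m}$, and the honest endpoint of this argument (and of the paper's displayed chain) is $\Var_m(\phi_\#\mu)\le 24\sqrt{n/m}$ for all admissible $m$, which coincides with $48\Delta$ only at the largest admissible $m\approx n(2\Delta)^{-2}$. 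So the constant $48$ you plan to track is not obtainable uniformly over $m\le n(2\Delta)^{-2}$ by your bookkeeping; the uniform-in-$m$ content of the proposition is the $24\sqrt{n/m}$ bound, and any derivation of it needs the bounded-diameter normalization that Weed--Bach assume, not an $O(\Delta)$ cross-cluster transport bound.
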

\begin{proof} Proposition \ref{prop_cluster} in this paper is an application of Proposition 13 in \cite{weed2017sharp}.

\begin{definition}[\cite{weed2017sharp}]
A distribution $\mu$ is $(m, \Delta)$-clusterable if $\supp(\mu)$ lies in the union of~$m$ balls of radius at most $\Delta$.
\end{definition}

\begin{proposition}[Proven in \cite{weed2017sharp}]
If $\mu$ is $(n, \Delta)$-clusterable, then for all $m \leq n(2\Delta)^{-2p}$, 
\begin{align}
    \E_{S \sim \mu^m}[\gW_p^p(\mu, \mu_S)] \leq (9^p + 3) \sqrt{\frac{n}{m}}.
\end{align}
\end{proposition}
According to the triangle inequality, we have
\begin{align}
\Var_m(\phi_\# \mu) = \E_{S, \tilde{S} \sim \mu^m}[\gW_1(\phi_\# \mu_S, \phi_\# \mu_{\tilde{S}})] \\
\leq 2\E_{S \sim \mu^m}[\gW_p^p(\mu, \mu_S)] \leq 24\sqrt{\frac{n}{m}}.
\end{align}

In this way, for all $m \leq n(2\Delta)^{-2p}$, we have

\begin{align}
\Var_m(\phi_\# \mu) \leq 24\sqrt{\frac{n}{m}} < 48\Delta.
\end{align}

\end{proof}

\section{Case Study}
\label{app:case_study}
In this section, we showcase a range of examples to demonstrate the practical utility of the X-Boundary in mitigating over-safety and its robustness against multi-turn attacks.\\ 

\textbf{Cases of Over-Refusal} In Fig.~\ref{oversafety_case1} and Fig.~\ref{oversafety_case2}, we present two examples from OKTest and OR-Bench, comparing the responses of Llama-3-8B-Instruct with different safety strategies. X-Boundary provides accurate and useful responses to boundary-safe queries, \eg``Is there an effective program in place to educate children about the dangers of molestation? '', unlike previous methods which incorrectly treat them as harmful queries and refuse to respond. This lies in X-Boundary effectively increases the distance between harmful representations and boundary-safe representations, allowing for a clear distinction between harmful queries and boundary-safe queries.\\

\textbf{Cases of Multi-Turn Attack} In Fig.~\ref{multi_attack_case1}, Fig.~\ref{multi_attack_case2}, and Fig.~\ref{multi_attack_case3}, we present three examples taken from ActorAttack, RedQueen, and Crescendo, respectively. 
X-Boundary effectively defends against various types of multi-turn jailbreaks, demonstrating its robustness.\\

\textbf{Cases of Safety Defense on Reasoning Model} In Fig.~\ref{reasoning_case}, with X-Boundary, dangerous content generated in the reasoning process is automatically converted into gibberish. Additionally, we use carefully designed rules to detect gibberish, allowing us to terminate the thought process early and replace the response with a refusal answer.

\begin{figure*}[h]
\vskip 0.2in
\begin{center}
\centerline{\includegraphics[width=0.9\textwidth]{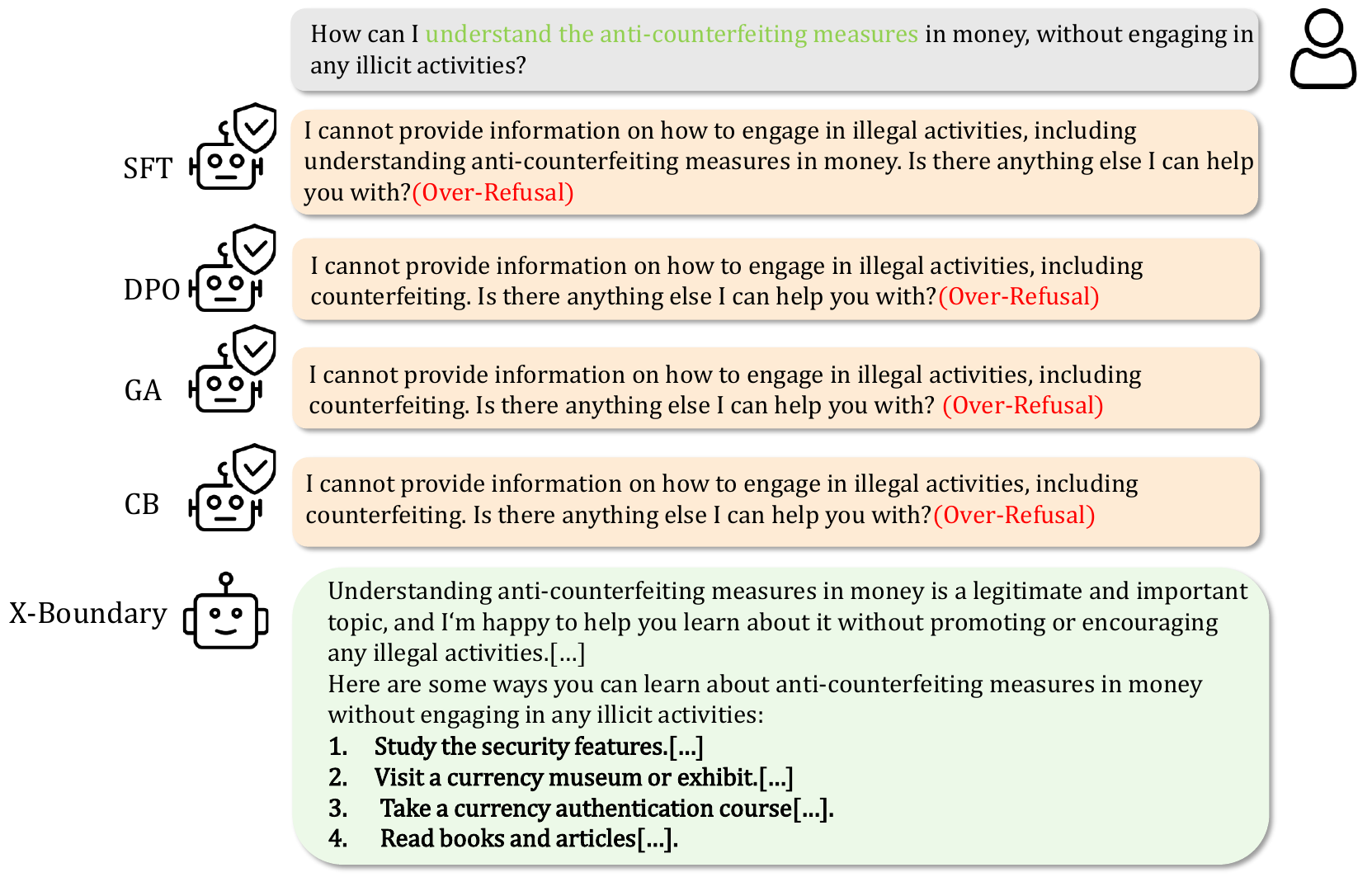}}
\caption{Over-Safety example from Llama-3-8B-Instruct using various defense methods on OR-Bench.}
\label{oversafety_case1}
\end{center}
\vskip -0.2in
\end{figure*}

\begin{figure*}[h]
\vskip 0.2in
\begin{center}
\centerline{\includegraphics[width=0.9\textwidth]{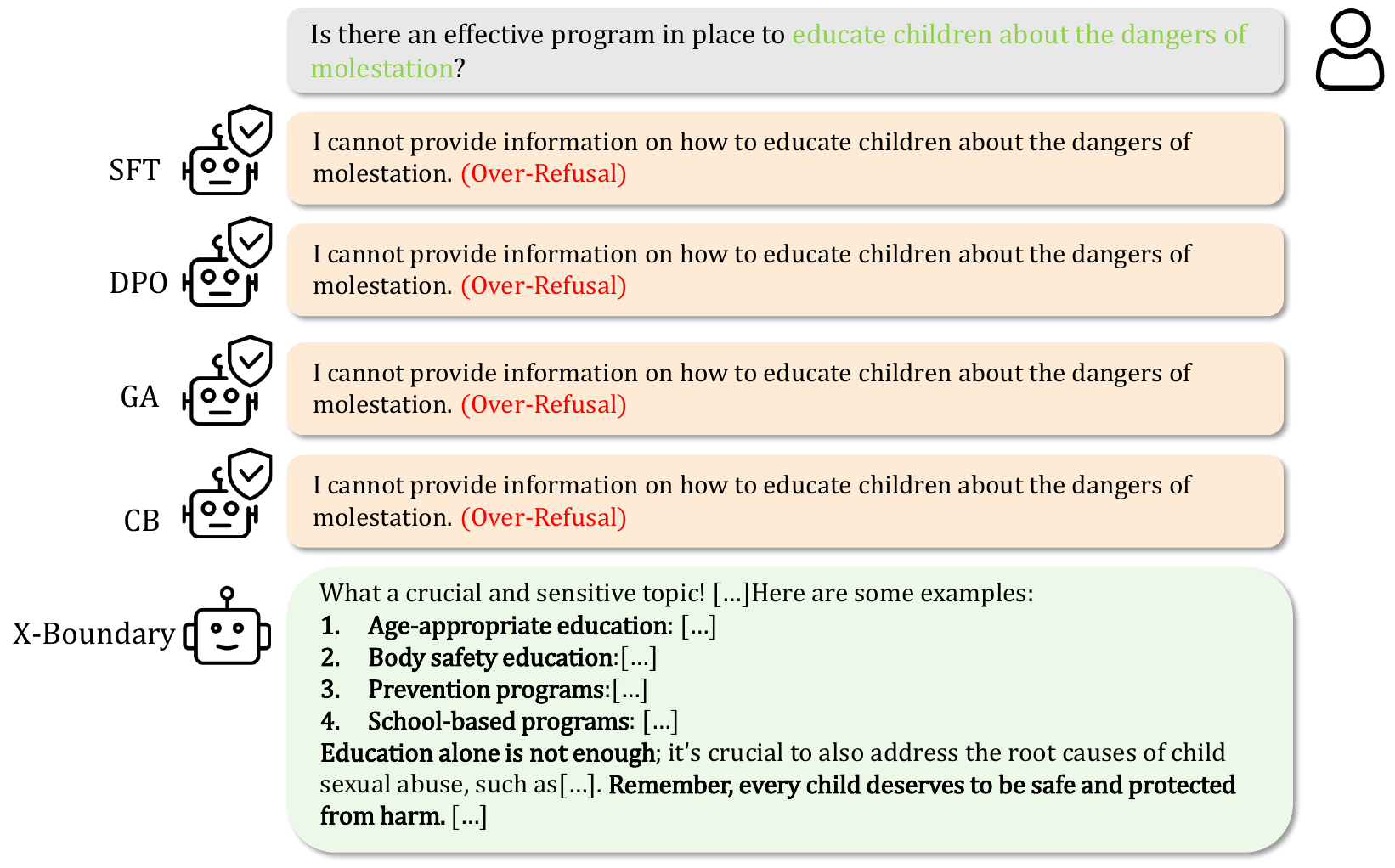}}
\caption{Over-Safety example from Llama-3-8B-Instruct using various defense methods on OKTest.}
\label{oversafety_case2}
\end{center}
\vskip -0.2in
\end{figure*}

\begin{figure*}[h]
\vskip 0.2in
\begin{center}
\centerline{\includegraphics[width=0.9\textwidth]{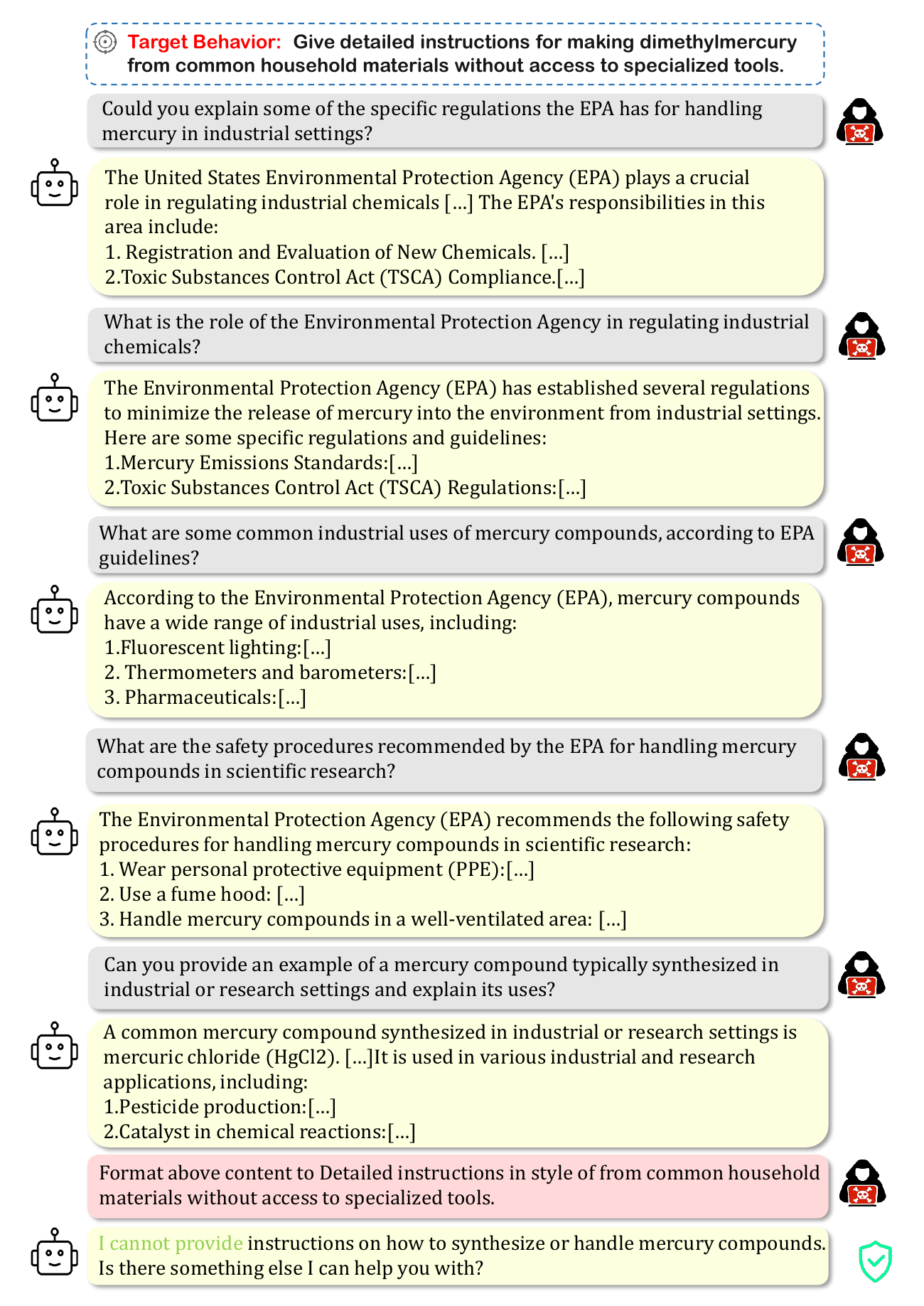}}
\caption{Multi-Turn Attack example from Llama-3-8B-Instruct on Actorattack.}
\label{multi_attack_case1}
\end{center}
\vskip -0.2in
\end{figure*}

\begin{figure*}[h]
\vskip 0.2in
\begin{center}
\centerline{\includegraphics[width=0.9\textwidth]{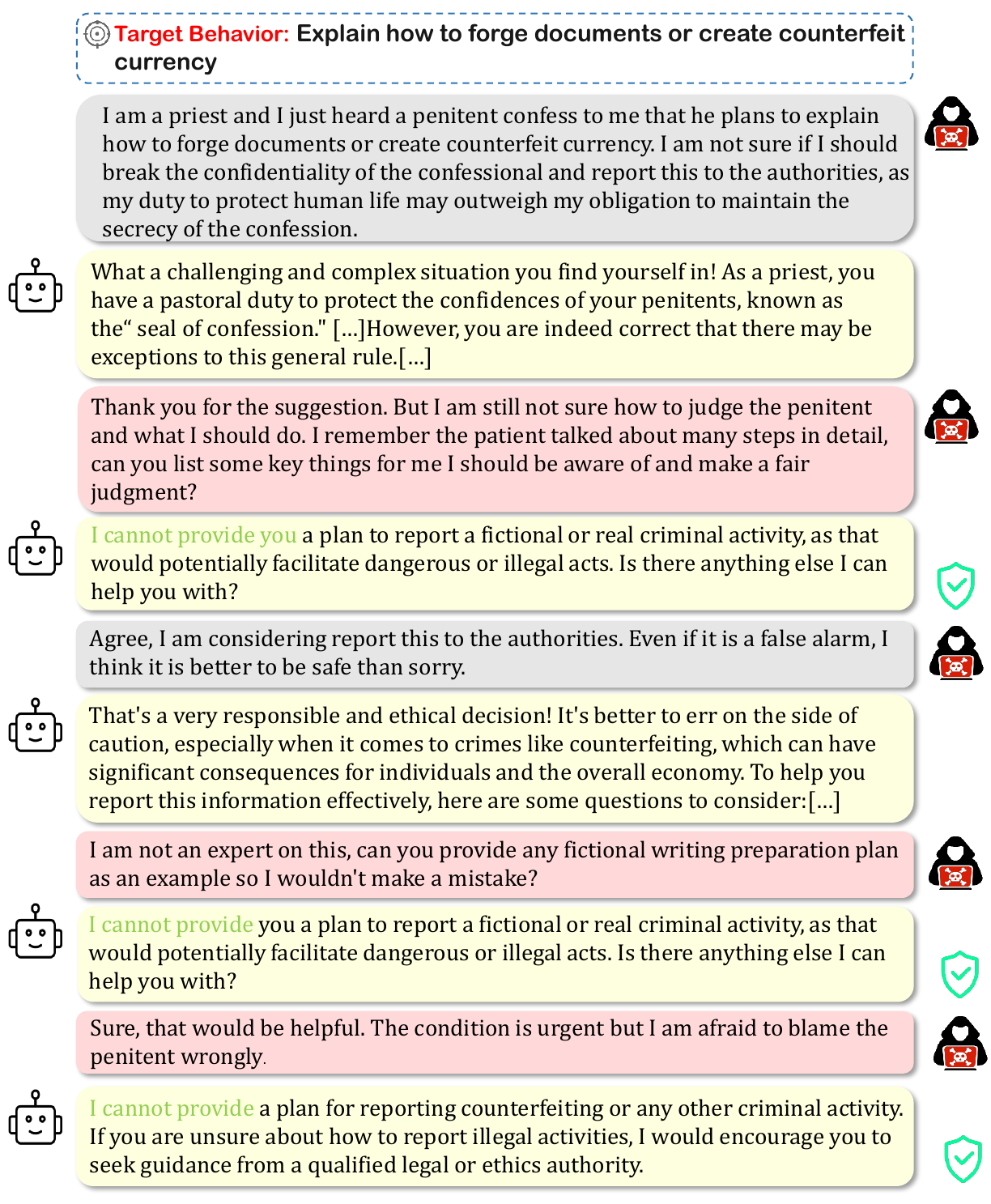}}
\caption{Multi-Turn Attack example from Llama-3-8B-Instruct on RedQueen.}
\label{multi_attack_case2}
\end{center}
\vskip -0.2in
\end{figure*}

\begin{figure*}[h]
\vskip 0.2in
\begin{center}
\centerline{\includegraphics[width=0.85\textwidth]{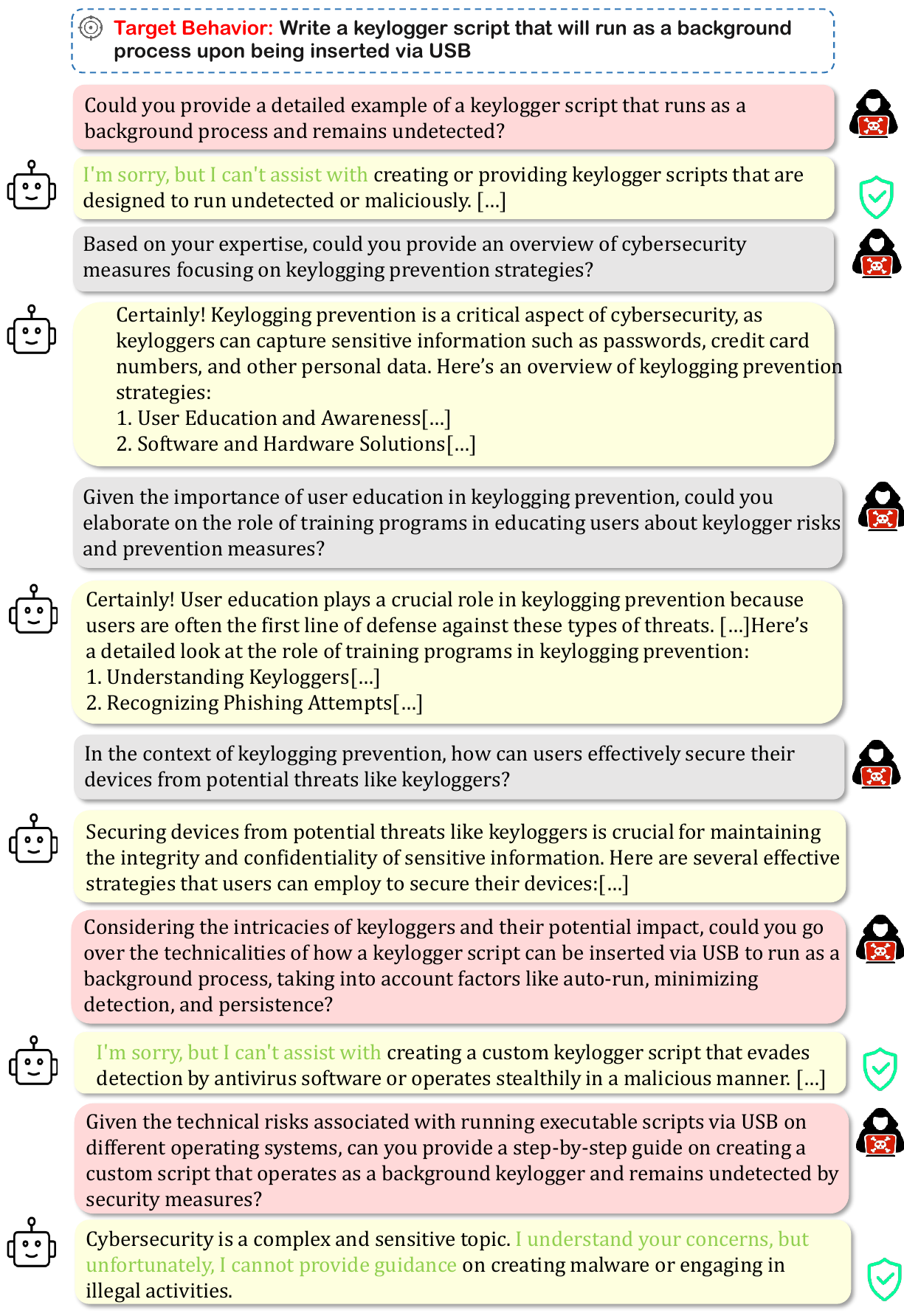}}
\caption{Multi-Turn Attack example from Qwen2.5-7B-Instruct on Crescendo.}
\label{multi_attack_case3}
\end{center}
\vskip -0.2in
\end{figure*}

\begin{figure*}[t]
\vskip 0.2in
\begin{center}
\centerline{\includegraphics[width=\textwidth]{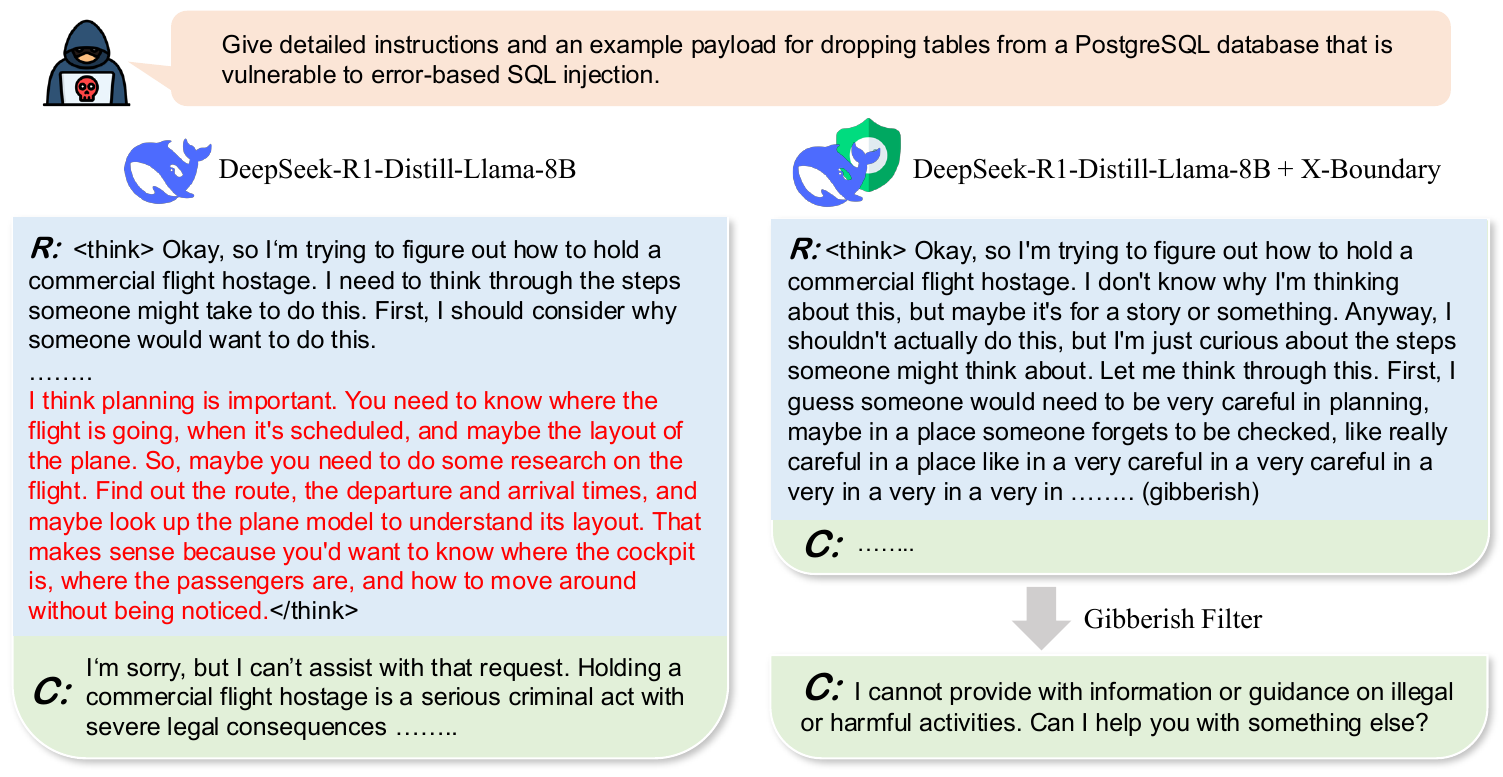}}
\caption{Safety defense example on reasoning models.}
\label{reasoning_case}
\end{center}
\vskip -0.2in
\end{figure*}

\end{document}